\theoremstyle{plain}
\newtheorem{theorem}{Theorem}[section]
\theoremstyle{definition}
\theoremstyle{remark}
\begin{document}

\begin{frontmatter}
\title{Easily Computed Marginal Likelihoods\\ for Multivariate Mixture Models \\Using the THAMES Estimator}
\runtitle{Marginal Likelihoods for Multivariate Mixture Models via THAMES}

\begin{aug}
\author[A]{\fnms{Martin}~\snm{Metodiev}\ead[label=e1]{martin.metodiev@doctorant.uca.fr}\orcid{0009-0000-9432-3756}},
\author[B]{\fnms{Nicholas J.}~\snm{Irons}\ead[label=e2]{nicholas.irons@stats.ox.ac.uk}\orcid{0000-0002-9720-8259}},
\author[C]{\fnms{Marie}~\snm{Perrot-Dockès}\ead[label=e3]{marie.perrot-dockees@u-paris.fr}},
\author[A,D]{\fnms{Pierre}~\snm{Latouche}\ead[label=e4]{pierre.latouche@uca.fr}}
\and
\author[E]{\fnms{Adrian E.}~\snm{Raftery}\ead[label=e5]{raftery@uw.edu}}
\address[A]{Laboratoire de Mathématiques Blaise Pascal,
Université Clermont Auvergne\printead[presep={,\ }]{e1}}

\address[B]{Department of Statistics and Leverhulme Centre for Demographic Science,
University of Oxford\printead[presep={,\ }]{e2}}

\address[C]{CNRS, MAP5,
Université Paris Cité\printead[presep={,\ }]{e3}}

\address[D]{Institut universitaire de France\printead[presep={,\ }]{e4}}

\address[E]{Departments of Statistics and Sociology,
University of Washington\printead[presep={,\ }]{e5}}

\runauthor{M. Metodiev et al.}
\end{aug}

\begin{abstract}
We present a new version of the truncated harmonic mean estimator (THAMES) for univariate or multivariate mixture models. The estimator computes the marginal likelihood from Markov chain Monte Carlo (MCMC) samples, is consistent, asymptotically normal and of finite variance. In addition, it is invariant to label switching, does not require posterior samples from hidden allocation vectors, and is easily approximated, even for an arbitrarily high number of components. Its computational efficiency is based on an asymptotically optimal ordering of the parameter space, which can in turn be used to provide useful visualisations. We test it in simulation settings where the true marginal likelihood is available analytically. It performs well against state-of-the-art competitors, even in multivariate settings with a high number of components. We demonstrate its utility for inference and model selection on univariate and multivariate data sets.
\end{abstract}

\begin{keyword}[class=MSC]
\kwd[Primary ]{62F15}
\kwd{62-04}
\kwd[; secondary ]{62F12}
\end{keyword}

\begin{keyword}
\kwd{Bayesian statistics}
\kwd{bridge sampling}
\kwd{harmonic mean estimator}
\kwd{label switching}
\kwd{marginal likelihood estimation}
\kwd{mixture modelling}
\end{keyword}

\end{frontmatter}

\section{Introduction}

Choosing the right number of components for a mixture model is a topic of much discussion \citep[see][for reviews]{Fr06-FiniteMixtureAndMarkovSwitchingModels,Ce_et_al19-model_selection_mixture_models}. The marginal likelihood, also known as the evidence or the normalising constant, is a Bayesian model choice criterion that can be used for this task and that is optimal from a Bayesian point of view, since it minimises the model selection error rate on average over the prior distribution \citep{Je61-marginallikelihood}.
 
Unfortunately, the marginal likelihood often does not possess an analytic expression. A variety of algorithms have been proposed for its computation, none of which is clearly seen as a gold standard \citep[see][for a review]{Ll_et_al23-review_marglikestims}. This is why the truncated harmonic mean estimator \citep[THAMES,][]{Me_et_al24-thames} was conceived, with the authors arguing that it is the only marginal likelihood estimator that meets the following three desiderata: it is precise, generic and simple. However, it is not adequate for mixture models. In addition, there is an almost complete lack in the literature of marginal likelihood estimation techniques for mixture models that are multivariate and contain a high number of components. The goal of this article is to present an adaptation of the THAMES that can be used in this setting.

\paragraph{\textbf{Marginal likelihood estimation}}

Available approaches for marginal likelihood estimation do not satisfy the three desiderata described before. These include bridge sampling \citep{MeWo96-bridgesampling} (when using a generic, non-symmetric proposal such as the bridge sampling implementation of \citet{gronau2020}) and Chib's estimator \citep{Ch95-chibs_estimator}, which have been observed to be imprecise when used to estimate the marginal likelihood of mixture models \citep{Ne99-symmetry_problem_chibs,Ce_et_al19-model_selection_mixture_models}. Many precise alternatives are not generic: they either become intractable as the number of mixture components increases \citep[e.g., the proposals of][]{Be_et_al03-quick_marglikestim_01,LeRo16-evidence_mixture_models}, or require specific Markov chain Monte Carlo (MCMC) samplers that can obtain samples from hidden allocation vectors \citep[e.g., the proposals of][]{No07-MargLikEmptyComponents,Ru_et_al10-clustered_marglikestims,Pe_et_al14-importance_sampling_mixtures,Ha_et_al22-QuickMargLikEstimInMixtures}. A truly generic estimator would not have to depend on these restrictions. Indeed, there are real-world examples of mixture models with a substantial number of components. For example, in \citet{Bo_et_al19-MBCbook2010} a dataset is analysed for which the true number of components is estimated to be 15. There are also many popular samplers that do not provide samples from the hidden allocation vectors, such as the Hamiltonian Monte Carlo (HMC) algorithm implemented in Stan \citep{stan2022}, the classical Metropolis-Hastings algorithm \citep{Metropolis1953,Hastings1970} and the tempered MCMC algorithm \citep{Ne96-temperedMCMC}.

 Finally, a simple estimator has to avoid the label switching problem without simulating from randomly permuted MCMC samples, because many parameter estimators of interest, such as the arithmetic mean, will not work on these samples. In this paper, label-switched samples are bypassed entirely by post-process relabelling algorithms such as Stephens' algorithm \citep{St00-Stephenslabelswitching} and the equivalence classes representatives (ECR) algorithm \citep{PaIl10-ECR_algo}. 
 Estimators that require simulating from randomly permuted MCMC samples \citep[e.g., the proposals of][]{Fruhwirth2004,Fr19-bridge_sampling_mixtures_02} are not simple by our definition.

The version of the THAMES that was introduced in  \citet{Me_et_al24-thames} is not appropriate in the context of mixture models. This is because it is optimal only in the case that the posterior distribution of the parameters is asymptotically normal. While this case is important, it does not occur in mixture models due to the highly irregular and often non-identifiable structure of the posterior distribution.

\paragraph{\textbf{Our proposal}}

Building on the ideas of \citet{Be_et_al03-quick_marglikestim_01} and \citet{Re20-newthames_variation}, we propose an adapted version of the THAMES that can be used in multivariate mixture models with a high number of components, while still adhering to the three desiderata of precision, generality and simplicity. It is precise, since it is consistent, asymptotically normal, and optimal in a certain sense, even for the multimodal posterior densities that typically appear in mixture models. It is also generic, since it can be easily approximated when the number of components is large, and it is compatible with any MCMC algorithm that provides a sample from the posterior distribution of the parameters. Most importantly, it is simple, since it is a symmetric truncated harmonic mean of the unnormalised posterior density values for a relabelled sample from the posterior parameters.

\paragraph{\textbf{Outline of the paper}}

The paper is organised as follows. In Section \ref{sec: previous work relating to the THAMES}, we describe previous work on which the THAMES for mixture models is built. Then, we introduce the THAMES for mixture models in Section \ref{ssec: truncated harmonic mean estimation for mixture models} and explain how it can be computed efficiently in Section \ref{ssec: Calculating the THAMES for mixture models efficiently}, using the principle of overlap graphs to obtain an ordering constraint. In addition, we  propose a model choice criterion, the criterion of overlap (CO), and show how to implement an identity from \citet{No04-MixtureModelsDifferentSizeLink,No07-MargLikEmptyComponents} in Section \ref{ssec: The THAMES for high dimensional mixture models in the case of empty components} to better deal with the phenomenon of empty components. We compare the THAMES to state-of-the-art estimators in simulation studies, where the true marginal likelihood is known, in Section \ref{sec: simulations}, and on real datasets, in Section \ref{sec: real datasets}. We conclude with a discussion in Section \ref{sec: discussion}.

\section{Previous work relating to the THAMES \label{sec: previous work relating to the THAMES}}
Let $\theta$ denote the model parameter of dimension $R$ and $\mathcal{D}$ the data. The Bayesian framework is considered, where $\pi(\theta)$ is the prior probability density function (PDF) and $L(\theta)=p(\mathcal{D}|\theta)$ is the likelihood.

Let $\theta^{(1)},\dots,\theta^{(T)}$ be a sample from the posterior distribution $p(\theta|\mathcal{D})$. The truncated harmonic mean estimator \citep[THAMES,][]{Me_et_al24-thames} of the reciprocal marginal likelihood is defined as \begin{align}
    \hat{Z}^{-1}=\frac{1}{T/2}\sum_{\substack{t=T/2+1,\\\theta^{(t)}\in A}}^T\frac{1/V(A)}{\pi(\theta^{(t)})L(\theta^{(t)})},
\end{align} where $A$ is the set\begin{align}
    A=E_{\hat{\theta},\hat{\Sigma},c}=\{\theta|(\theta-\hat{\theta})^\intercal\hat{\Sigma}^{-1}(\theta-\hat{\theta})<c^2\},
\end{align} an ellipsoid centered at an estimate of the posterior mean $\hat{\theta}$ and scaled by an estimate of the posterior covariance matrix $\hat{\Sigma}$. Notice that the posterior sample is split: $\hat{\theta}$ and $\hat{\Sigma}$ are computed using the first half (or portion, more generally), while the THAMES is computed from the second half. Here $V(A)$ denotes the volume of $A$.

Any choice of $A$ within the posterior support leads to an unbiased estimator on the scale of the reciprocal marginal likelihood. Under mild assumptions, an optimal truncation set is given by an $\alpha$-highest-posterior-density (HPD) region \begin{align*}
    H_\alpha=\{\theta|\pi(\theta)L(\theta)>q_\alpha\},
\end{align*} where the quantile $q_\alpha$ is chosen such that the posterior probability of $H_\alpha$ is equal to the tuning-parameter $\alpha$ \citep[for a proof of this statement, see][Supplement A]{Me25-supplement}. Unfortunately, the volume of $H_\alpha$ is needed for the computation of the THAMES. We know of no simple way to compute this volume in general. \citet{RoWr09-thames_on_hpd_region} suggested using a convex hull which contains $H_\alpha$, but approximating the volume of a convex hull is difficult when the dimension of the parameter space is large. This is avoided by the proposal of \citet{Re20-newthames_variation}, who suggested to instead use a set that is sufficiently close to $H_\alpha$, but whose volume is easy to compute, by taking the intersection of the ellipsoid $E_{\hat{\theta},\hat{\Sigma},c}$ with $H_\alpha$: \begin{align*}
    B_{\hat{\theta},\hat{\Sigma},c,\alpha}&=E_{\hat{\theta},\hat{\Sigma},c}\cap \{\theta|\pi(\theta)L(\theta)>\hat{q}_\alpha\}\\&=\{\theta|(\theta-\hat{\theta})^\intercal\hat{\Sigma}^{-1}(\theta-\hat{\theta})<c^2,\pi(\theta)L(\theta)>\hat{q}_\alpha\},
\end{align*} where $\hat{q}_\alpha$ is the empirical $(1-\alpha)$-quantile of the sample of unnormalised log-posterior values. 

This proposal requires additional Monte Carlo simulations to compute the volume of $B_{\hat{\theta},\hat{\Sigma},c,\alpha}$. This is done by using a technique that was successfully employed by \citet{Si_et_al08-hpd_integral_approximation} in reciprocal importance sampling: an i.i.d sample $\nu^{(1)},\dots,\nu^{(N)}$ from the uniform distribution on $E_{\hat{\theta},\hat{\Sigma},c}$ is simulated to calculate \begin{align}\label{eq: Balpha_approx_volume}
    V(B_{\hat{\theta},\hat{\Sigma},c,\alpha})\simeq V(E_{\hat{\theta},\hat{\Sigma},c})\cdot \frac{1}{N}\sum^N_{j=1}\mathbbm{1}_{\{\theta|\pi(\theta)L(\theta)>\hat{q}_\alpha\}}(\nu^{(j)}).
\end{align} The original version of the THAMES did not require additional Monte Carlo simulations in many cases. However, the estimator of \citet{Re20-newthames_variation}  is arguably a necessary adaptation in the case of mixture models, since it is presumably more robust in the face of irregularities in the posterior distribution. \citet{Re20-newthames_variation} implemented this estimator and obtained good results in a variety of situations. In  particular, in the context of mixture models, it performed well, but only after adding additional MCMC simulations from random permutation sampling via a technique from \citet{Fr01-labelswitching_random_permutations}. This runs against our desideratum of simplicity.

\section{Methods}
\subsection{Truncated harmonic mean estimation for mixture models \label{ssec: truncated harmonic mean estimation for mixture models}}

The standard mixture model setting is considered, where the data $\mathcal{D}=Y\in\mathcal{M}_{n\times d}(\mathbbm{R})$ consist of an i.i.d sample of $n$ different $d$-dimensional observations, $Y=(Y_1,\dots,Y_n)$, with a distribution that is a mixture of distributions from the same family,\begin{align}\label{eq: mix_distribution}
    Y_i|\xi,\tau\stackrel{\text{i.i.d}}\sim\sum^G_{g=1}\tau_gf(\cdot|\xi_{g}),\quad i=1,\dots,n,
\end{align} where\; $G>1,\xi=(\xi_1,\dots,\xi_G)$ is a matrix containing the mixture component parameters, $\tau_1,\dots,\tau_G>0,\sum^G_{g=1}\tau_g=1$ are the mixture proportions represented by the vector $\tau=(\tau_1,\dots,\tau_G)$, and $f(\cdot|\xi_{g})$ is a density characterised by $\xi_{g}$. We are interested in estimating the marginal likelihood $Z(G)$ for different values of the number of mixture components $G$. The full parameter is denoted by $\theta=(\xi_1,\dots,\xi_G,\tau_1,\dots\tau_{G-1})$.

If the prior distribution of $(\tau_g,\xi_g)$ does not depend on the index $g$, which is commonly the case and an assumption that we make from now on, the posterior distribution is symmetric. This results in the well-known label-switching problem, where some parts of the MCMC sample randomly switch their labels. Luckily, it is possible to simulate a sample whose labels are not switched by relabelling, e.g., with Stephens' algorithm \citep{St00-Stephenslabelswitching} or the ECR algorithm \citep{PaIl10-ECR_algo}. In the former, the sample is permuted such that it minimises the posterior expected loss under a class of loss functions, while the latter is based on permuting the sample such that it follows the law of a non-symmetric distribution. The two algorithms give similar results in our experience.

Let $\theta^{(1)\star},\dots,\theta^{(T)\star}$ denote the posterior sample after relabelling. A marginal likelihood estimator can be based entirely on this sample if it is symmetric, i.e., if the labels of the parameters do not change the value of the estimator \citep[for a proof, see][Supplement A]{Me25-supplement}. By a procedure suggested in \citet{Be_et_al03-quick_marglikestim_01}, a symmetric estimator can be obtained by averaging a non-symmetric estimator over the permutations $P_1,\dots,P_{G!}$ of the posterior sample. Hence, we suggest combining the proposal of \citet{Re20-newthames_variation} with the symmetrisation procedure used by \citet{Be_et_al03-quick_marglikestim_01} to obtain the symmetric (S) THAMES for mixture models, \begin{align}\label{eq: symmetric_thames_intro}
    \hat{Z}^{-1}_{\text{S}}(G)=\frac{1}{G!}\sum_{o=1}^{G!}\frac{1}{T/2}\sum_{\substack{t=T/2+1,\\P_o(\theta^{(t)\star})\in B_{\hat{\theta},\hat{\Sigma},c,\alpha}}}^T\frac{1/V(B_{\hat{\theta},\hat{\Sigma},c,\alpha})}{\pi(\theta^{(t)\star})L(\theta^{(t)\star})}.
\end{align} 

We compute the volume of $B_{\hat{\theta},\hat{\Sigma},c,\alpha}$ by setting $N=T$ in Equation \eqref{eq: Balpha_approx_volume}. Note that a naive computation of the full sum over the $G!$ permutations is not necessary, since most permutations will evaluate to 0. We elaborate on this in the next section.

The symmetric THAMES (hereafter just THAMES) possesses the same attractive properties as the original proposal of \citet{Me_et_al24-thames}: it is unbiased on the scale of the reciprocal marginal likelihood, consistent, and asymptotically normal as the size $T$ of the simulations increases \citep[for a proof, see][Supplement A]{Me25-supplement}. Equation \eqref{eq: symmetric_thames_intro} also directly implies that this estimator is symmetric, and we show in \citet[Supplement A,][]{Me25-supplement} that this symmetrisation is in some sense optimal over all mixtures of truncated harmonic mean estimators. Only the relabelled posterior sample is required for its computation, with the first half being used to estimate the posterior mean and covariance matrix. It possesses two tuning-parameters ($\alpha$ and $c$) and we illustrate in \citet[Supplement B,][]{Me25-supplement} how these can be chosen in an optimal way. Finally, we show in the next section that the THAMES for mixture models can be computed quickly, even as the number of components $G$ grows large.

\subsection{Calculating the THAMES for mixture models efficiently\label{ssec: Calculating the THAMES for mixture models efficiently}}

A naive computation of Equation \eqref{eq: symmetric_thames_intro} would require $G!$ computation steps, one for each permutation of the $G$ different labels. This rapidly becomes infeasible computationally as $G$ increases. However, in practice, many of these computations can be avoided due to the fact that the support $B_{\hat{\theta},\hat{\Sigma},c,\alpha}$ is contained within a very simple geometric object, the ellipsoid $E_{\hat{\theta},\hat{\Sigma},c}$. We can hence reduce computation time by selecting a restricted number of components and only summing over those permutations which respect the order of these components within the ellipsoid $E_{\hat{\theta},\hat{\Sigma},c}$. The  selection of the components is done via graph-theoretic tools, while the permutations are determined via quadratic discriminant analysis (QDA; Ghojogh and Crowley 2019). \nocite{GhCr19-qda_optimal_if_normal}

In the next three subsections, we propose a way to calculate the THAMES more efficiently. First, we apply the principle of $``$overlap graphs$"$ to choose a maximum set of components that do not overlap. Then, we use this set to construct an ordering constraint via QDA. We use this ordering constraint to compute the THAMES efficiently by limiting the number of permutations required in the computation. We emphasise that overlapping components are determined in the parameter space and not in the observation space, since the objective is to reduce the computation cost of Equation \eqref{eq: symmetric_thames_intro}, which is computed on the parameters from the posterior.

\subsubsection{Visualising the posterior distribution via overlap graphs}

Since  the radius $c=\sqrt{R+1}$ is taken as suggested by \citet{Me_et_al24-thames}, $E_{\hat{\theta},\hat{\Sigma},c}$ roughly estimates a 50\%-HPD-region on the relabelled posterior distribution. For more details, we refer the reader to \citet{Me_et_al24-thames}. In such a region, two components should not be overlapping if none of the components is superfluous. It can easily be verified \citep[see][Supplement B]{Me25-supplement} if the hyperplane $\{\theta|\xi_{g_1}=\xi_{g_2}\}$ crosses the ellipsoid $E_{\hat{\theta},\hat{\Sigma},c}$. If this is the case, we say that there is overlap between the components $g_1$ and $g_2$. This could be an indicator of either $g_1$ or $g_2$ being superfluous since they could be the same component.  

\begin{figure}
    \centering
%
%
    


    


\includegraphics[scale=.4]{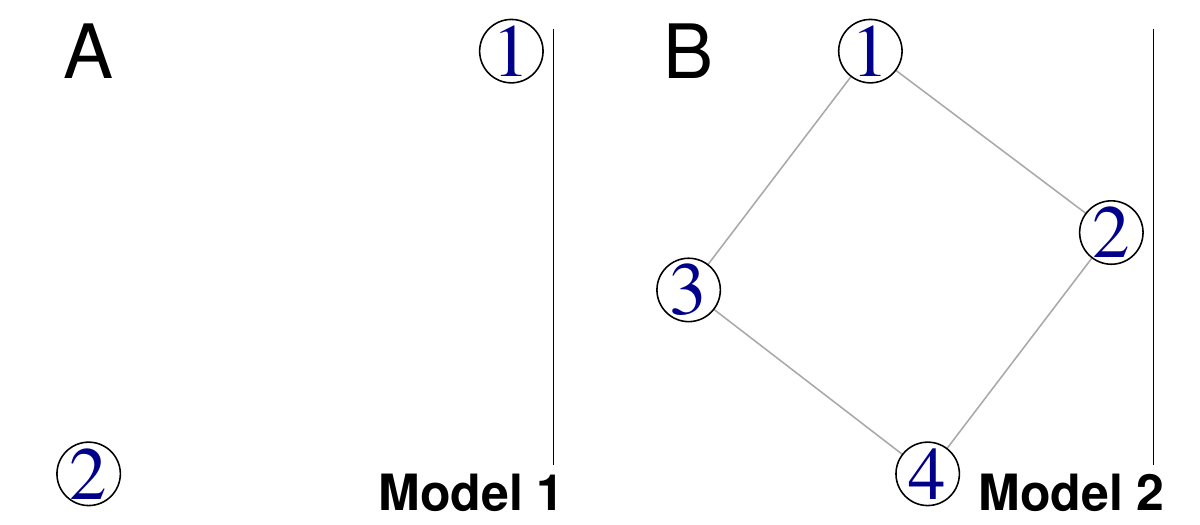}
    \caption{Example of two overlap graphs; an edge is added whenever there is overlap between 2 components.}
    \label{fig: separability graph}
\end{figure}

 Figure \ref{fig: separability graph} shows two $``$overlap graphs$"$, where each node is a component and an edge is added between two nodes if there is overlap between the corresponding components. In Model 1, there is no overlap between all components, so there are no edges present. In Model 2, there is overlap between the pairs (1,2), (2,4), (4,3) and (3,1), so there are at most two non-overlapping components present.

Using graph-theoretic techniques on an overlap graph, we can find a maximum independent set, i.e., a maximum set of non-overlapping components. In Model 2 of Figure \ref{fig: separability graph} such a set is, for instance, given by components 1 and 4. This set can be determined via the build\_cover\_greedy function in the gor package \citep{gorpackage} and is denoted by $I(G)$. Note that the complexity of build\_cover\_greedy is quadratic with respect to $G$. As we are going to see, $I(G)$ will strongly help reducing the computational cost of Equation \eqref{eq: symmetric_thames_intro}. Interestingly, the size $|I(G)|$ can also be interpreted as an estimate of the number of distinct (i.e., non-overlapping) components. Thus, we define the criterion of overlap (CO) as \begin{align*}
    \text{CO}&=\#\substack{\text{distinct}\\\text{components}} - \#\substack{\text{overlapping}\\\text{ components}}=|I(G)|-(G-|I(G)|).
\end{align*} 

The CO is related to the goal of determining the number of distinct components (i.e., clusters) instead of the number of true components. The difference between these two different goals was pointed out in \citet{Ba_et_al10-clustering_vs_densityestim}. While determining the number of components $G$ that maximises the marginal likelihood is useful in the latter case, choosing $G$ such that it maximises the CO is useful in the former. The set $I(G)$ that determines the CO is also going to be used to define an ordering constraint to estimate the marginal likelihood: nodes outside of $I(G)$ cannot be ordered together with nodes inside $I(G)$ and should thus not be used directly to define the constraint. The constraint will be constructed via QDA, and will in turn be used to quickly compute the THAMES.

\subsubsection{Applying QDA to the posterior sample}

Our aim is to find a classification of a component parameter $\xi_g$ such that the labels of two component parameters differ if, and only if, their classifications differ. This is done using discriminant analysis.

Focusing on the mixture component parameters in $\xi$, discriminant analysis can be performed on the relabelled posterior sample because the labels of the parameters are known by construction. Indeed, $\xi_1$ are the parameters of component $1$, $\xi_2$ of component $2$, and so on. Thus, a labelled sequence of size $G\cdot T$ from the relabelled posterior sample is given by \begin{align*}
    &\xi_1^{(1)\star},\dots,\xi_G^{(1)\star},\dots,\xi_1^{(T)\star},\dots,\xi_G^{(T)\star}.
\end{align*}

In the following, QDA is applied to this sequence. The idea is that QDA should be able to distinguish two samples from two  component parameters if, and only if, they are concentrated around two different values.

Let $\hat{\xi}_{1},\dots,\hat{\xi}_{G}$ and $\hat{S}_{1},\dots,\hat{S}_{G}$ denote the estimated component means and covariance matrices on the second half of the posterior sample, respectively. Let $\text{MVN}_u(\xi_{g'};\hat{\xi}_{g},\hat{S}_{g})$ denote the multivariate normal distribution of dimension $u$ with mean $\hat{\xi}_g$ and covariance matrix $\hat{S}_g$, evaluated at $\xi_{g'}$. Then, any value $\xi_{g'}\in\mathbb{R}^u$ (of which the true component $g'$ is unknown) will be assigned to \begin{align}
    \hat{g}(\xi_{g'})=\text{argmax}_{g}\hat{w}_g(\xi_{g'}),\quad\hat{w}_g(\xi_{g'})=\frac{\frac{1}{G}\text{MVN}_u(\xi_{g'};\hat{\xi}_{g},\hat{S}_{g})}{\sum_{\tilde{g}=1}^G\frac{1}{G}\text{MVN}_u(\xi_{g'};\hat{\xi}_{\tilde{g}},\hat{S}_{\tilde{g}})}.\label{eq:qda} 
\end{align}Note that the cluster proportions are always estimated as $\frac{1}{G}$ by construction. Thus, $\xi^{(t)\star}_{g'}$ is assigned the wrong label if $\hat{g}(\xi^{(t)\star}_{g'})\neq g'$.

It should also be noted that, if $G$ is the true number of components, the relabelled posterior distribution should converge to the true value as the size of the data increases. Moreover, the condition needed to apply QDA is asymptotic normality, and posteriors are often asymptotically normal when converging to the truth \citep{HeydeJohnstone1979,Ghosal2000,Shen2002,Miller2021}. Since this condition is often met, we argue that an asymptotically optimal ordering constraint can be defined via QDA.

\subsubsection{Defining the ordering constraint}

 In the literature, ordering constraints have been employed to assure identifiability \citep[see for example][]{RiGr97-datasets,StPh97-BayesianMethodsForNormalMixtures,LeDe00-MargLikstEstim_withconstraints,Ge07-simpleMCMCworks_whenconstraintorsymmetric}. However, the specific choice of the constraint can be of crucial importance \citep{Ce_et_al00-mixture_params_estims}. Results from the previous two sections will be applied to construct a constraint that is useful for our purposes, in the sense that the constraint will mostly hold within an approximate 50\%-HPD-region.

An approximate 50\%-HPD-region was used to define the set $I(G)$ such that a constraint could hold on this region for all of the components of $I(G)$. We now restrict the set of components over which the QDA allocation probabilities are computed. Thus, conditioning the QDA output on $I(G)$ gives the probability $\hat{w}_g(\xi_{g'}|I(G)) = \frac{\hat{w}_{g}(\xi_{g'})}{\sum_{\tilde{g}\in I(G)}\hat{w}_{\tilde{g}}(\xi_{g'})}$ for all $g\in I(G),g'\in\{1,\dots,G\}$.

Given normality, the assignment $\hat{g}$ and the uncertainty assessment $\hat{w}$ meet the optimality results provided by QDA \citep{GhCr19-qda_optimal_if_normal}. With this in mind, we combine this assignment $\hat{g}(\xi_{g'}|I(G))$ with its probability $\hat{w}_{\hat{g}(\xi_{g'}|I(G))}(\xi_{g'}|I(G))$ to obtain the summary measure $W(\xi_{g'})$:\begin{align*}
    W(\xi_{g'})=\hat{g}(\xi_{g'}|I(G))+1-\hat{w}_{\hat{g}(\xi_{g'}|I(G))}(\xi_{g'}|I(G)).
\end{align*}$W$ distinguishes the components in $I(G)$ by construction: if $\xi_{g'}^{(t)\star}$ is assigned to the component $\hat{g}(\xi_{g'}^{(t)\star}|I(G))$ by QDA, it is placed within $[\hat{g}(\xi_{g'}^{(t)\star}|I(G)),\hat{g}(\xi_{g'}^{(t)\star}|I(G))+1]$ by $W$. It is moved further away from $\hat{g}(\xi_{g'}^{(t)\star}|I(G))$ if its placement was assigned with low certainty.

$W$ is used to define the subspace \begin{align}\label{eq:identifiable_subset}
    \{\xi_1,\dots,\xi_G:W(\xi_1)\leq\dots\leq W(\xi_G)\}.
\end{align} This subspace will be used to subdivide the parameter space in $G!$ different subspaces $J_1,\dots,J_{G!}$ created by permuting the ordering constraint, e.g., $J_1=\{\xi_1,\dots,\xi_G:W(\xi_1)\leq W(\xi_2)\leq\dots\leq W(\xi_G)\},J_2=\{\xi_1,\dots,\xi_G:W(\xi_2)\leq W(\xi_1)\leq\dots\leq W(\xi_G)\}$ and so on. This subdivision will be used for the computation of the THAMES in Equation \eqref{eq: symmetric_thames_intro}.

\subsubsection{Easily computing the THAMES}

\begin{figure}
    \centering
    
    \includegraphics[scale=.25]{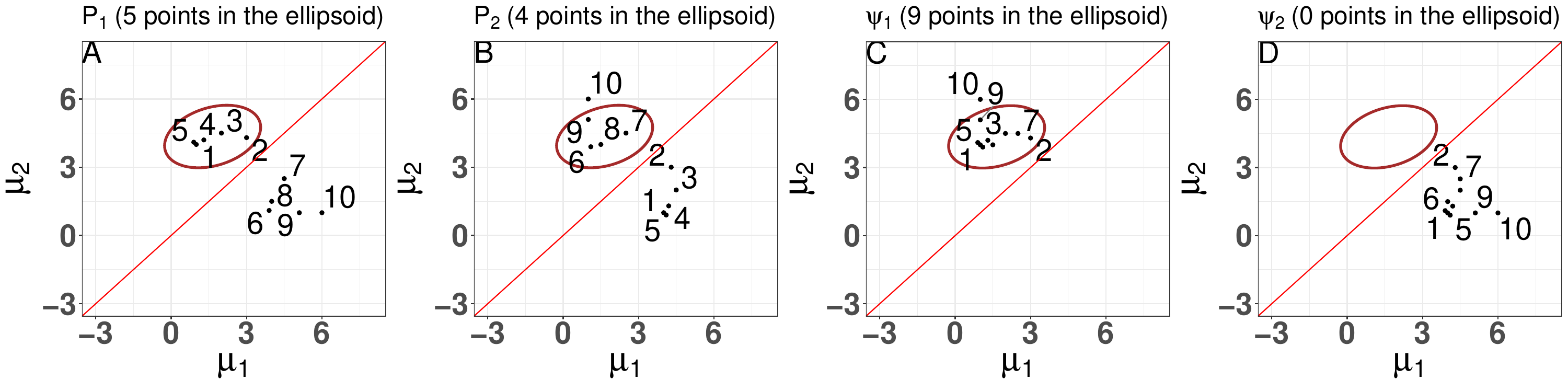}

    \caption{A toy example to illustrate the computation of the THAMES; the points that lie in the ellipsoid can be counted by applying $P_1,P_2$ or $\psi_1,\psi_2$.}
    \label{fig:dummy_example}
\end{figure}

Let $\psi_1,\dots,\psi_{G!}$ denote the $G!$ different ordering algorithms that map $\theta$ to the different permutations $J_1,\dots,J_{G!}$ of the subspace defined in Equation \eqref{eq:identifiable_subset}, respectively, by permuting the labels of $\theta$. Note that these are different from $P_1,\dots,P_{G!}$, since $\psi_1,\dots,\psi_{G!}$ always enforce some version of Equation \eqref{eq:identifiable_subset} independent of the value of $(\xi_1,\dots,\xi_G)$. For example, $\psi_1$ may order $(\xi_1,\dots,\xi_G)$ to $(\xi_1^{\psi_1},\dots,\xi_G^{\psi_1})$ such that $W(\xi_1^{\psi_1})\leq W(\xi_{2}^{\psi_1})\leq \dots\leq W(\xi_{G}^{\psi_1})$ for all values of $(\xi_1,\dots,\xi_G)$, while $\psi_2$ would result into $W(\xi_{2}^{\psi_2})\leq W(\xi_{1}^{\psi_2})\leq\dots\leq W(\xi_{G}^{\psi_2})$ for all values of $(\xi_1,\dots,\xi_G)$.

The difference between $\psi_1,\dots,\psi_{G!}$ and $P_1,\dots,P_{G!}$ is shown in Figure \ref{fig:dummy_example} in the case that $G=2,T=10$ and that there are only two parameters $\theta^{(t)\star}=(\mu_1^{(t)\star},\mu_2^{(t)\star})$, one for each component, with $W(\mu_g)=\mu_g$. Computing Equation \eqref{eq: symmetric_thames_intro} boils down to counting the number of times that $\psi_o(\theta^{(t)\star}),P_o(\theta^{(t)\star})$ falls in the ellipse for any $o=1,\dots,G!$. One can either apply $P_1$ (the identity) and $P_2$ (flipping the $x$- and $y$-axis), or $\psi_1$ (ordering the points such that $\mu_1^{(t)\star}\leq \mu_2^{(t)\star}$ for all $t=1,\dots,T$) and $\psi_2$ (ordering the points such that $\mu_2^{(t)\star}\leq\mu_1^{(t)\star}$ for all $t=1,\dots,T$). The outcome will be the same: 9 points are counted to be in the ellipse for some $\psi_o(\theta^{(t)\star}),P_o(\theta^{(t)\star})$, but computing $\psi_2(\theta^{(t)\star})$ could be avoided entirely since the ellipse does not cross the identity line.

We are going to determine which of $\psi_1,\dots,\psi_{G!}$ to evaluate as follows: let $\Delta$ denote the adjacency matrix defined by $\Delta_{g_1,g_2}=1$ if $W(\xi_{g_1})< W(\xi_{g_2})$ for all $\xi_{g_1},\xi_{g_2}$ that satisfy the ellipsoidal constraint on $E_{\hat{\theta},\hat{\Sigma},c}$. In addition, let $\Omega$ denote the index set over the collection of all topological orderings on the  graph defined by $\Delta$. In other words, $o\in \Omega$ if the previously selected inequalities $W(\xi_{g_1}^{\psi_o})< W(\xi_{g_2}^{\psi_o})$ hold within the support of the THAMES. Thus, the THAMES from Equation \eqref{eq: symmetric_thames_intro} is equal to \begin{align}\label{eq:efficient_thames}
     \hat{Z}^{-1}_{\textup{S}}(G)=\frac{1}{G!}\sum_{o\in \Omega}\frac{1}{T/2}\sum_{\substack{t=T/2+1\\\psi_o(\theta^{(t)\star})\in B_{\hat{\theta},\hat{\Sigma},c,\alpha}}}^T\frac{1/V(B_{\hat{\theta},\hat{\Sigma},c,\alpha})}{\pi(\theta^{(t)\star})L(\theta^{(t)\star})}.
\end{align} The idea is that we only need to compute those orderings $\psi_{o}$ whose output does not contradict the constraints that hold in $B_{\hat{\theta},\hat{\Sigma},c,\alpha}$. All other orderings would be 0 anyway. For a proof, see \citet[Supplement B,][]{Me25-supplement}.

The inequalities $W(\xi_{g_1})< W(\xi_{g_2})$ can be verified approximately by re-using the Monte Carlo sample from Equation \eqref{eq: Balpha_approx_volume}. See \citet[Supplement B,][]{Me25-supplement} for details on this computation and the computation of $\Omega$, which can be done in polynomial time with respect to the size of $\Omega$. Note that any measurable function $W$ could have been used to construct $\Omega$. $W$ was chosen by QDA to target a function for which the inequalities $W(\xi_{g_1})<W(\xi_{g_2})$ are fulfilled for as many pairs $(g_1,g_2)$ as possible, thus reducing the size of $\Omega$.

The parameters $c$ and $\alpha$ can always be chosen such that the set $\Omega$ is small enough and that the THAMES can be computed in a reasonable amount of time through Equation \eqref{eq:efficient_thames}: the smaller the chosen ellipsoid is, the more the overlap decreases. See \citet[Supplement B,][]{Me25-supplement} for more details. In practice we found that this is rarely necessary, since few of the different orderings need to be evaluated. This is likely due to the fact that the components overlap less and less as the size of the data $n$ increases.

\subsection{The THAMES for high dimensional mixture models in the case of empty components\label{ssec: The THAMES for high dimensional mixture models in the case of empty components}}

Aside from the computational challenges already discussed, employing mixture models in high dimensional settings can pose further difficulties. In particular, when the number of components is large, we run the risk of overfitting the data, resulting in empty components assigned to no data points. The posterior distribution of these empty component parameters is often highly diffuse which makes it difficult to fit an ellipsoid around them, a technique that is vital for the precision of the THAMES. In this section, we will show how this problem can be sidestepped.

The component $g$ of a posterior sample entry $\theta^{(t)}$ is identified as possibly being empty if the product of conditional proportions \begin{align*}
    \prod_{i=1}^n(1-\hat{z}_{i,g}^{(t)})\text{, where }\hat{z}_{i,g}^{(t)}=\frac{\tau_g^{(t)}f(Y_i;\xi_g^{(t)})}{\sum_{\tilde{g}=1}^G\tau_{\tilde{g}}^{(t)}f(Y_i;\xi_{\tilde{g}}^{(t)})},
\end{align*} is not close to zero. This may be true for a large value of the total number of components $G$, but is going to be less and less likely as $G$ decreases. Thus, the empty component eventually disappears when simulating from the posterior of a sufficiently small model. In addition, when the probability of having an empty component is large, it is also easily estimated. Suppose that a Dirichlet prior with parameters $(e_1,\dots,e_G)>0$ is used for the proportions. We use an identity from \citet{No04-MixtureModelsDifferentSizeLink,No07-MargLikEmptyComponents} to recursively estimate the THAMES as \begin{align}\label{eq:multivariate_reduction_estim}
    \hat{Z}_\text{S}(G)=\frac{\Gamma\left(\sum^G_{g=1}e_g\right)\Gamma\left(n+\sum^{G-1}_{g=1}e_g\right)}{\Gamma\left(n+\sum^G_{g=1}e_g\right)\Gamma\left(\sum^{G-1}_{g=1}e_g\right)}\cdot \hat{Z}_\text{S}(G-1)\cdot\frac{1}{\hat{p}_0(G)},
\end{align} with \begin{align*}
    \hat{p}_0(G)=\frac{1}{G}\sum^G_{g=1}\frac{1}{T}\sum_{t=1}^T\prod_{i=1}^n(1-\hat{z}_{i,g}^{(t)}).
\end{align*}

The validity of this approach is proven in \citet[Supplement A,][]{Me25-supplement}. Thus, we can simply check if $\hat{p}_0(G)$ exceeds  some upper threshold (e.g., $1/T$), and reduce the estimation procedure to the estimation of the marginal likelihood of the lower dimensional model if this is true\footnote{It should be noted that this technique can only be used if $\xi_1,\dots,\xi_G$ are a priori independent and their prior distribution does not depend on the total number of components. This was true in all multivariate Gaussian mixture models that we used.}.

\section{Experiments}
We now demonstrate the utility of the THAMES for mixture models on simulated and real datasets.
For conciseness, we limit the model descriptions to the definition of the mixture models and the definition of the datasets only. More detailed descriptions, such as the exact prior distributions used, are presented in \citet[Supplement C,][]{Me25-supplement}.

\subsection{Simulations\label{sec: simulations}}

We perform simulations for two settings in which the marginal likelihood can be either computed analytically or easily approximated with a high degree of accuracy. Formulas for these exact or approximate solutions are given in \citet[Supplement D,][]{Me25-supplement}. In the first, the exact expression of the marginal likelihood is computed in a brute-force manner, which can be done due to the low number of data points. In the second, the marginal likelihood is computed for a high number of extremely distinct components.

\subsubsection{Univariate Gaussian mixtures}

\paragraph{\textbf{Model}}

The mixture model is univariate Gaussian with $n=10$ data points,\begin{align}\label{eq: univ_gauss_mix}
    Y\in\mathbb{R}^{n},\quad  Y_{1},\dots,Y_{n}|\mu,\sigma,\tau\stackrel{\text{i.i.d}}\sim \sum_{g=1}^G\tau_g\mathcal{N}(\mu_g,\sigma^2_g),
\end{align} where only the parameters $\mu=(\mu_1,\dots,\mu_G)$ are unknown and \\$\tau=(\tau_1,\dots,\tau_G),\sigma=(\sigma_1,\dots,\sigma_G)$ are known.

\paragraph{\textbf{Settings}}

Simulations are conducted with the fitted and true $G=2$, but also when \textbf{underfitting}, where the true $G=3$ but the model is fitted for $G=2$, and \textbf{overfitting}, where the converse is true. Component overlap is varied via the parameter $\rho\in[0,1]$ defined in \citet[Supplement C,][]{Me25-supplement}. It measures how far the posterior means of the component parameters are away from each other. $\rho=0$ means very little overlap, while $\rho=1$ implies high overlap.

\paragraph{\textbf{Results}}
\begin{figure}
    \centering
\begin{tabular}{c}\\
     \includegraphics[scale=.6]{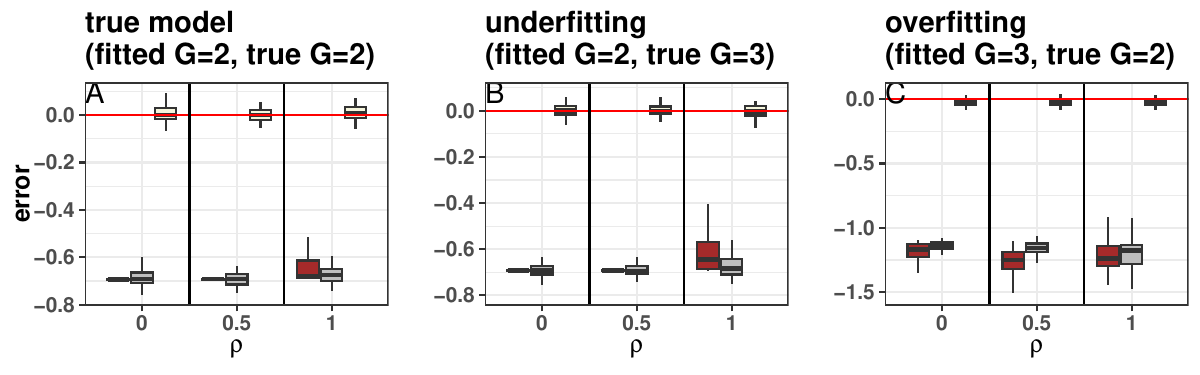}\\
      \includegraphics[scale=.8]{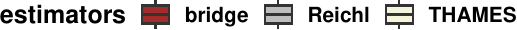}
\end{tabular}

    \caption{Boxplots of the error $\log(\hat{Z})-\log(Z)$ of bridge sampling, the estimator from \citet{Re20-newthames_variation}, and the THAMES for 50 different simulations of the data for increasing levels of overlap $\rho=0,1/2,1$ and 3 different scenarios corresponding to the true model, underfitting and  overfitting with $G=2,G=3$, respectively.}
    \label{fig:true_marglik_results}
\end{figure}

The THAMES is compared to the bridge sampling \citep{MeWo96-bridgesampling} implementation of \citet{gronau2020} and the estimator from \citet{Re20-newthames_variation} (using the same tuning parameters as the THAMES) for 50 independently generated datasets, with $T=10,000$, a burn-in of 2,000 and using ECR for relabelling. Results are shown in Figure \ref{fig:true_marglik_results}. Boxplots of the THAMES overlapped with the truth in all cases. However, the other estimators are imprecise and worsen as $\rho$ decreases. 

\subsubsection{Multivariate Gaussian mixture models}

\paragraph{\textbf{Model}}

Let $Y_{1},\dots,Y_{n}$ be i.i.d values from a mixture of $G$ multivariate normal distributions, namely
\begin{align}\label{eq: gauss_mix_multi}
    Y\in\mathbb{R}^{n\times d},\quad  Y_{1},\dots,Y_{n}|\mu,\Sigma,\tau\stackrel{\text{i.i.d}}\sim \sum^G_{g=1}\tau_g\text{MVN}_d(\mu_g,\Sigma_g).
\end{align} The settings were each chosen such that they correspond to a real dataset.

\paragraph{\textbf{Settings}}

\begin{itemize}
    \item[(a)] Setting 1 (moderate $G$): the model is $d=6$-dimensional with $G=5$ components and $n=200$ data points.
    
    \item[(b)] Setting 2 (large $G$): the model is $d=5$-dimensional with $G=15$ components and $n=345$ data points. The number of free parameters $R$ is very large in this setting ($R=314$). As is typical in such high dimensional models \citep[see for example][]{Be_et_al97-BayesianGaussianMixtures}, it can be decreased by imposing additional constraints on the component covariance matrices. In this case, they are restricted to be diagonal matrices, resulting in much fewer free parameters ($R=164$).
\end{itemize}

\paragraph{\textbf{Results}}

\begin{figure}
    \centering
    \begin{tabular}{c}\\
    \includegraphics[scale=.6]{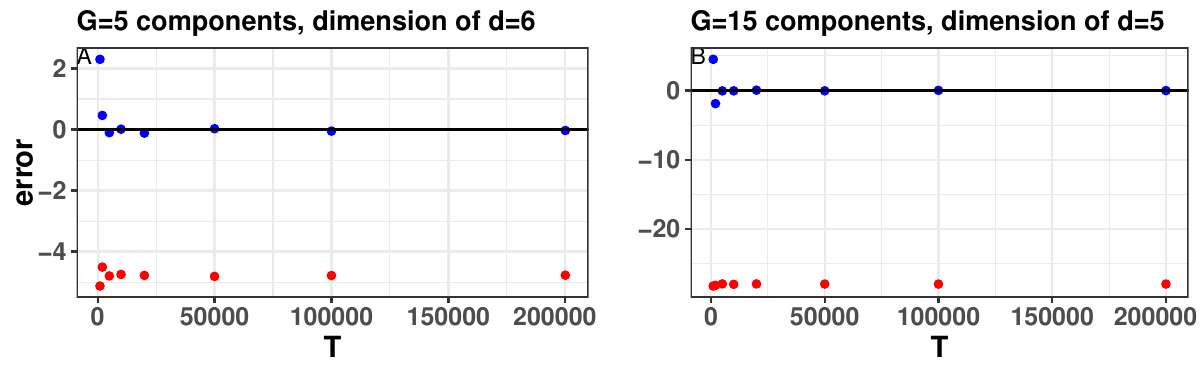}\\
    \includegraphics[scale=.4]{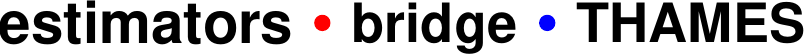}
    \end{tabular}

    \caption{The error $\log(\hat{Z})-\log(Z)$ of the THAMES and bridge sampling for different values of $T$; on the left: the unconstrained Gaussian mixture model with $G=5,d=6$; on the right: the constrained Gaussian mixture model with $G=15,d=5$.}
    \label{fig:true_marglik_gaussmulti_results}
\end{figure}

The results are shown in Figure \ref{fig:true_marglik_gaussmulti_results}. The THAMES and bridge sampling were each computed on an MCMC sample with a burn-in of $2,000$, after relabelling via the ECR algorithm. Their evaluations are shown for different posterior sample sizes. The THAMES converges to the true value, while bridge sampling is off by a factor of $G!$. It should be noted that the strategy of simply shifting a non-symmetric marginal likelihood estimator by $G!$ does not work in general. It would work in this specific, artificial setting, only because all components are well separated. It does not work in others, as pointed out by \cite{Ne99-symmetry_problem_chibs}.

\subsection{Real datasets \label{sec: real datasets}}

\subsubsection{The univariate case}

The THAMES was evaluated for the datasets used in \citet{RiGr97-datasets}, namely the $``$acidity$"$, $``$enzyme$"$, and $``$galaxy$"$ dataset, available at \citet{Rpackage_multimode}. Results of other, related marginal likelihood estimators on the same datasets are available in \citet[Section 2.3.2]{Ce_et_al19-model_selection_mixture_models} and were thus used for comparison.

\paragraph{\textbf{Settings}} All datasets were fitted to a univariate Gaussian mixture model with hierarchical priors, varying $G$ between 2 and 6.

\paragraph{\textbf{Results}}

\begin{figure}
    \centering
    \begin{tabular}{c}\\
         \includegraphics[scale=.6]{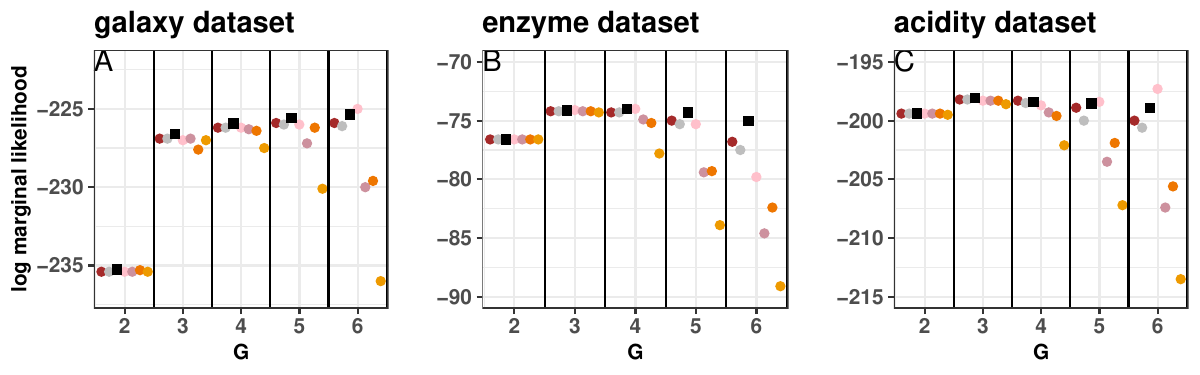}  \\
         \includegraphics[scale=1]{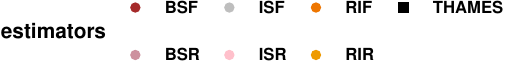}
    \end{tabular}
    \caption{Estimates from fully symmetric and random permutation bridge- (BSF, BSR), reciprocal- (RIF, RIR), importance sampling (ISF, ISR) and THAMES for the enzyme, galaxy and acidity datasets.}
    \label{fig:true_marglik_richardsongreen_results}
\end{figure}

\begin{figure}
    \centering
\includegraphics[scale=.6]{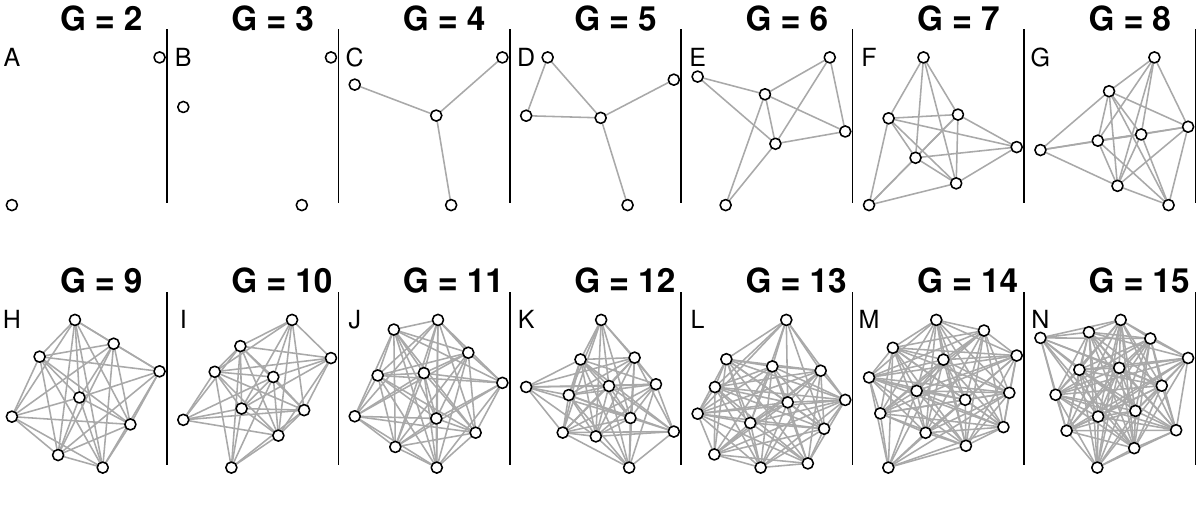}  
    \caption{Overlap graphs of the galaxies data set.}
    \label{fig:overlap_graphs_galaxies}
\end{figure}

The THAMES estimates were compared to those provided by \citet{Ce_et_al19-model_selection_mixture_models} for other estimators under the same circumstances, namely $T=10,000$ and a burn-in of 2,000. The results are shown in Figure \ref{fig:true_marglik_richardsongreen_results}. The other estimators are fully symmetric and random permutation bridge sampling (BSF and BSR), importance sampling (ISF and ISR) and reciprocal importance sampling (RIF and RIR). Out of these, it was argued in \citet{Ce_et_al19-model_selection_mixture_models} that BSF and ISF performed well. Next to them, the median out of 20 independent THAMES estimates computed on different MCMC samples is shown. The THAMES is close to ISF and BSF in almost all settings, with slight differences occurring when $G=6$.

One limitation of the ISF and BSF estimators is that they become intractable as $G$ increases, for example if $G=15$. This is not the case for the THAMES. To illustrate this, it was computed from $G=2$ to 15 for the galaxy dataset with $T=100,000$ and a burn-in of $2,000$. It peaked at $G=6$, while the criterion of overlap (CO) was maximised at $G=3$. The latter result is visualised with overlap graphs, in which the sharp increase of overlapping components at $G>3$ is apparent (Figure \ref{fig:overlap_graphs_galaxies}). The difference in the CO and the marginal likelihood in this setting is not surprising and can be interpreted as follows: the number of true components is estimated to be 6, while the number of distinguishable components is estimated to be 3. The $"$correct$"$ result depends on the interests of the practitioner \citep[see][]{Ba_et_al10-clustering_vs_densityestim}.

\subsubsection{The multivariate case}

Marginal likelihoods are computed for the Swiss banknote dataset, which contains 100 genuine and 100 counterfeit old-Swiss 1000-franc bank notes. They are also computed for the BUPA liver disorders dataset \citep{liver_disorders_60}, which contains 345 blood samples from different male individuals.

\paragraph{\textbf{Settings}}

The Swiss banknotes dataset consists of $n=200$ observations and $d=6$ variables. The BUPA liver disorders dataset consists of $n=345$ observations and $d=5$ variables. Multivariate Gaussian mixtures were fitted to both datasets. The covariance matrices of the BUPA liver disorders dataset were restricted to be diagonal matrices to deal with the high dimension of the parameters.

\paragraph{\textbf{Results}}

We set $T=10,000$ with a burn-in of 2,000 and apply ECR for relabelling. The THAMES was evaluated from $G=2$ to $G=5$ for the Swiss banknotes dataset, with the maximum being obtained at $G=3$. This is consistent with the literature, where the estimated number of components for the Swiss banknotes dataset is often 2 or 3 \citep{CoCo23-numcluster_in_banknote_data}. For the BUPA liver disorders dataset, the THAMES was evaluated for $G$ between $2$ and $15$. It was maximised at $G=4$, with distinct levels of gamma-glutamyl transpeptidase for each component. For more details, we refer the reader to \citet[Supplement C,][]{Me25-supplement}.

\section{Discussion \label{sec: discussion}}

We have presented an adapted version of the THAMES for mixture models. It is consistent, asymptotically normal, symmetric, and is computed on the relabelled MCMC parameters, without the need of simulations from the hidden allocation vectors. Its simplicity is exemplified by the fact that it only requires two inputs: the unnormalized log-posterior density (i.e., the logarithm of the prior times the likelihood) and the relabelled MCMC sample from the posterior. A function that takes exactly those two arguments is implemented in the R-package \href{https://github.com/M-crypto645/thamesmix}{thamesmix} \citep{thamesmix}.
    
The THAMES uses the symmetrisation method that was used in \citet{Be_et_al03-quick_marglikestim_01} and is tractable in the number of components, unlike other estimators that use this method. It should be noted that adjustments presented in \citet{Be_et_al03-quick_marglikestim_01} and \citet{LeRo16-evidence_mixture_models} already decrease the computation time of symmetrised estimators, but they do not lower it below $O(G!)$, which is the case for the THAMES. For example, we only needed to evaluate less than $10^{-6}$ percent of the permutations over which we sum in the case that $G=15$ for the galaxy dataset. However, for large $G$, higher overlap implies that the ellipsoid onto which the THAMES is defined will have to be shrunk to ensure that the THAMES can be computed in a reasonable amount of time, thus increasing the variance of the THAMES.
    
Finally, we want to emphasise that our estimator can be applied to very general high dimensional and multivariate mixture models. This opens up a new avenue of numerous applications. One interesting application may be to further adapt the THAMES for the stochastic block model \citep{WaWo87-sbm,NoSn01-sbm}, in which only the complete data likelihood is available.

\section*{Competing interests}
No competing interest is declared.

\section*{Data availability}
All code is made available at the following {\color{blue}\href{https://github.com/M-crypto645/thames-mixture-models}{link}}. The acidity, enzyme and galaxy datasets are available at \citet{Rpackage_multimode}, the Swiss banknote dataset is available at \citet{Lu_et_al23-mclust}, the BUPA dataset is available at \href{https://doi.org/10.24432/C54G67}{https://doi.org/10.24432/C54G67}.

\section*{Acknowledgments}
The authors would like to thank Qing Mai from Florida State University for the recommendation of QDA, as well as Alexander Reisach, Léo Hahn Lecler and Michael Schaller for fruitful discussions on partial orderings. In addition, they would like to thank Christian Robert for helpful comments. Raftery's research was supported by NIH grant R01 HD-070936, the Fondation des Sciences Math\'{e}matiques de Paris (FSMP), Universit\'{e} Paris Cit\'{e} (UPC), and the Blumstein-Jordan professorship at the University of Washington. Irons's research was supported by a Shanahan Endowment Fellowship, an NICHD training grant, T32 HD101442-01, to the Center for Studies in Demography and Ecology at the University of Washington, the Florence Nightingale Bicentenary Fellowship in Computational Statistics and Machine Learning from the University of Oxford Department of Statistics and the Leverhulme Centre for Demographic Science and the Leverhulme Trust (Grant RC-2018-003). Latouche's research was supported by the Institut Universitaire de France (IUT).

\phantom{\citep{Kl13-probabilitycourse,GeDe94-RIS,Me_et_al24-thames,Ba10-approximate_by_Gaussian_mixtures,GrFr09-labelswitching_genuine_multimodality,Fr06-FiniteMixtureAndMarkovSwitchingModels,No04-MixtureModelsDifferentSizeLink,No07-MargLikEmptyComponents,WiRa11-kolmogorov_distance,Re20-newthames_variation,GoId83-quadratic_optimization02,GoId06-quadratic_optimization01,BeWe19-quadprog,KnSz74-alltoporderings,Ca_et_al17-stan,Fr06-FiniteMixtureAndMarkovSwitchingModels,FrRa07-bayes_regularization_mixmodels,Be_et_al97-BayesianGaussianMixtures,RiGr97-datasets,RiGr97-datasets,Ce_et_al19-model_selection_mixture_models,Gr23-bayesmix_package,Fr06-FiniteMixtureAndMarkovSwitchingModels,Ha_et_al22-QuickMargLikEstimInMixtures,Ge_et_al95-Bayesian_data_analysis}}

\bibliographystyle{chicago}
\bibliography{THAMES_mixtures_lib.bib}

\begin{thebibliography}{}

\bibitem[\protect\citeauthoryear{Ameijeiras-Alonso, Crujeiras, and Rodr{\'i}guez-Casal}{Ameijeiras-Alonso et~al.}{2021}]{Rpackage_multimode}
Ameijeiras-Alonso, J., R.~M. Crujeiras, and A.~Rodr{\'i}guez-Casal (2021).
\newblock {multimode}: An {R} package for mode assessment.
\newblock {\em Journal of Statistical Software\/}~{\em 97\/}(9), 1--32.

\bibitem[\protect\citeauthoryear{Asensio}{Asensio}{2023}]{gorpackage}
Asensio, C. (2023).
\newblock {\em gor: Algorithms for the Subject Graphs and Network Optimization}.
\newblock R package version 1.0.

\bibitem[\protect\citeauthoryear{Bacharoglou}{Bacharoglou}{2010}]{Ba10-approximate_by_Gaussian_mixtures}
Bacharoglou, A. (2010).
\newblock Approximation of probability distributions by convex mixtures of {G}aussian measures.
\newblock {\em Proceedings of the American Mathematical Society\/}~{\em 138\/}(7), 2619--2628.

\bibitem[\protect\citeauthoryear{Baudry, Raftery, Celeux, Lo, and Gottardo}{Baudry et~al.}{2010}]{Ba_et_al10-clustering_vs_densityestim}
Baudry, J.-P., A.~E. Raftery, G.~Celeux, K.~Lo, and R.~Gottardo (2010).
\newblock Combining mixture components for clustering.
\newblock {\em Journal of computational and graphical statistics\/}~{\em 19\/}(2), 332--353.

\bibitem[\protect\citeauthoryear{Bensmail, Celeux, Raftery, and Robert}{Bensmail et~al.}{1997}]{Be_et_al97-BayesianGaussianMixtures}
Bensmail, H., G.~Celeux, A.~E. Raftery, and C.~P. Robert (1997).
\newblock Inference in model-based cluster analysis.
\newblock {\em statistics and Computing\/}~{\em 7}, 1--10.

\bibitem[\protect\citeauthoryear{Berkhof, van Mechelen, and Gelman}{Berkhof et~al.}{2003}]{Be_et_al03-quick_marglikestim_01}
Berkhof, J., I.~van Mechelen, and A.~Gelman (2003).
\newblock A {B}ayesian approach to the selection and testing of mixture models.
\newblock {\em Statistica Sinica\/}~{\em 13\/}(2), 423--442.

\bibitem[\protect\citeauthoryear{Bouveyron, Celeux, Murphy, and Raftery}{Bouveyron et~al.}{2019}]{Bo_et_al19-MBCbook2010}
Bouveyron, C., G.~Celeux, T.~B. Murphy, and A.~E. Raftery (2019).
\newblock {\em Model-based clustering and classification for data science: with applications in R}, Volume~50.
\newblock Cambridge University Press.

\bibitem[\protect\citeauthoryear{Carpenter, Gelman, Hoffman, Lee, Goodrich, Betancourt, Brubaker, Guo, Li, and Riddell}{Carpenter et~al.}{2017}]{Ca_et_al17-stan}
Carpenter, B., A.~Gelman, M.~D. Hoffman, D.~Lee, B.~Goodrich, M.~Betancourt, M.~Brubaker, J.~Guo, P.~Li, and A.~Riddell (2017).
\newblock Stan: A probabilistic programming language.
\newblock {\em Journal of Statistical Software\/}~{\em 76\/}(1), 1–32.

\bibitem[\protect\citeauthoryear{Celeux, Fr{\"u}hwirth-Schnatter, and Robert}{Celeux et~al.}{2019}]{Ce_et_al19-model_selection_mixture_models}
Celeux, G., S.~Fr{\"u}hwirth-Schnatter, and C.~P. Robert (2019).
\newblock Model selection for mixture models--perspectives and strategies.
\newblock In {\em Handbook of mixture analysis}, pp.\  117--154. Chapman and Hall/CRC.

\bibitem[\protect\citeauthoryear{Celeux, Hurn, and Robert}{Celeux et~al.}{2000}]{Ce_et_al00-mixture_params_estims}
Celeux, G., M.~Hurn, and C.~P. Robert (2000).
\newblock Computational and inferential difficulties with mixture posterior distributions.
\newblock {\em Journal of the American Statistical Association\/}~{\em 95\/}(451), 957--970.

\bibitem[\protect\citeauthoryear{Chib}{Chib}{1995}]{Ch95-chibs_estimator}
Chib, S. (1995).
\newblock Marginal likelihood from the {G}ibbs output.
\newblock {\em Journal of the American Statistical Association\/}~{\em 90\/}(432), 1313--1321.

\bibitem[\protect\citeauthoryear{Coraggio and Coretto}{Coraggio and Coretto}{2023}]{CoCo23-numcluster_in_banknote_data}
Coraggio, L. and P.~Coretto (2023).
\newblock Selecting the number of clusters, clustering models, and algorithms. a unifying approach based on the quadratic discriminant score.
\newblock {\em Journal of Multivariate Analysis\/}~{\em 196}, 105181.

\bibitem[\protect\citeauthoryear{Fraley and Raftery}{Fraley and Raftery}{2007}]{FrRa07-bayes_regularization_mixmodels}
Fraley, C. and A.~E. Raftery (2007).
\newblock {B}ayesian regularization for normal mixture estimation and model-based clustering.
\newblock {\em Journal of classification\/}~{\em 24\/}(2), 155--181.

\bibitem[\protect\citeauthoryear{Fr{\"u}hwirth-Schnatter}{Fr{\"u}hwirth-Schnatter}{2001}]{Fr01-labelswitching_random_permutations}
Fr{\"u}hwirth-Schnatter, S. (2001).
\newblock {M}arkov chain {M}onte {C}arlo estimation of classical and dynamic switching and mixture models.
\newblock {\em Journal of the American Statistical Association\/}~{\em 96\/}(453), 194--209.

\bibitem[\protect\citeauthoryear{{Fr\"{u}hwirth-Schnatter}}{{Fr\"{u}hwirth-Schnatter}}{2004}]{Fruhwirth2004}
{Fr\"{u}hwirth-Schnatter}, S. (2004).
\newblock Estimating marginal likelihoods for mixture and {M}arkov switching models using bridge sampling techniques.
\newblock {\em Econometrics Journal\/}~{\em 7}, 143--167.

\bibitem[\protect\citeauthoryear{Fr{\"u}hwirth-Schnatter}{Fr{\"u}hwirth-Schnatter}{2006}]{Fr06-FiniteMixtureAndMarkovSwitchingModels}
Fr{\"u}hwirth-Schnatter, S. (2006).
\newblock {\em Finite mixture and Markov switching models}.
\newblock Springer.

\bibitem[\protect\citeauthoryear{Fr{\"u}hwirth-Schnatter}{Fr{\"u}hwirth-Schnatter}{2019}]{Fr19-bridge_sampling_mixtures_02}
Fr{\"u}hwirth-Schnatter, S. (2019).
\newblock Keeping the balance—bridge sampling for marginal likelihood estimation in finite mixture, mixture of experts and {M}arkov mixture models.
\newblock {\em Brazilian Journal of Probability and Statistics\/}~{\em 33\/}(4), 706 -- 733.

\bibitem[\protect\citeauthoryear{Gelfand and Dey}{Gelfand and Dey}{1994}]{GeDe94-RIS}
Gelfand, A.~E. and D.~K. Dey (1994).
\newblock {B}ayesian model choice: asymptotics and exact calculations.
\newblock {\em Journal of the Royal Statistical Society: Series B (Methodological)\/}~{\em 56}, 501--514.

\bibitem[\protect\citeauthoryear{Gelman, Carlin, Stern, and Rubin}{Gelman et~al.}{1995}]{Ge_et_al95-Bayesian_data_analysis}
Gelman, A., J.~B. Carlin, H.~S. Stern, and D.~B. Rubin (1995).
\newblock {\em {B}ayesian data analysis}.
\newblock Chapman and Hall/CRC.

\bibitem[\protect\citeauthoryear{Geweke}{Geweke}{2007}]{Ge07-simpleMCMCworks_whenconstraintorsymmetric}
Geweke, J. (2007).
\newblock Interpretation and inference in mixture models: Simple {MCMC} works.
\newblock {\em Computational Statistics \& Data Analysis\/}~{\em 51\/}(7), 3529--3550.

\bibitem[\protect\citeauthoryear{Ghojogh and Crowley}{Ghojogh and Crowley}{2019}]{GhCr19-qda_optimal_if_normal}
Ghojogh, B. and M.~Crowley (2019).
\newblock Linear and quadratic discriminant analysis: Tutorial.

\bibitem[\protect\citeauthoryear{Ghosal}{Ghosal}{2000}]{Ghosal2000}
Ghosal, S. (2000).
\newblock Asymptotic normality of posterior distributions for exponential families when the number of parameters tends to infinity.
\newblock {\em Journal of Multivariate Analysis\/}~{\em 74}, 49--68.

\bibitem[\protect\citeauthoryear{Goldfarb and Idnani}{Goldfarb and Idnani}{1983}]{GoId83-quadratic_optimization02}
Goldfarb, D. and A.~Idnani (1983).
\newblock A numerically stable dual method for solving strictly convex quadratic programs.
\newblock {\em Mathematical programming\/}~{\em 27\/}(1), 1--33.

\bibitem[\protect\citeauthoryear{Goldfarb and Idnani}{Goldfarb and Idnani}{2006}]{GoId06-quadratic_optimization01}
Goldfarb, D. and A.~Idnani (2006).
\newblock Dual and primal-dual methods for solving strictly convex quadratic programs.
\newblock In {\em Numerical Analysis: Proceedings of the Third IIMAS Workshop Held at Cocoyoc, Mexico, January 1981}, pp.\  226--239. Springer.

\bibitem[\protect\citeauthoryear{Gronau, Singmann, and Wagenmakers}{Gronau et~al.}{2020}]{gronau2020}
Gronau, Q.~F., H.~Singmann, and E.-J. Wagenmakers (2020).
\newblock {bridgesampling}: An {R} package for estimating normalizing constants.
\newblock {\em Journal of Statistical Software\/}~{\em 92\/}(10), 1--29.

\bibitem[\protect\citeauthoryear{Gruen}{Gruen}{2023}]{Gr23-bayesmix_package}
Gruen, B. (2023).
\newblock {\em bayesmix: {B}ayesian Mixture Models with JAGS}.
\newblock R package version 0.7-6.

\bibitem[\protect\citeauthoryear{Gr{\"u}n and Leisch}{Gr{\"u}n and Leisch}{2009}]{GrFr09-labelswitching_genuine_multimodality}
Gr{\"u}n, B. and F.~Leisch (2009).
\newblock Dealing with label switching in mixture models under genuine multimodality.
\newblock {\em Journal of Multivariate Analysis\/}~{\em 100\/}(5), 851--861.

\bibitem[\protect\citeauthoryear{Hairault, Robert, and Rousseau}{Hairault et~al.}{2022}]{Ha_et_al22-QuickMargLikEstimInMixtures}
Hairault, A., C.~P. Robert, and J.~Rousseau (2022).
\newblock Evidence estimation in finite and infinite mixture models and applications.

\bibitem[\protect\citeauthoryear{Hastings}{Hastings}{1970}]{Hastings1970}
Hastings, W.~K. (1970).
\newblock {M}onte {C}arlo sampling methods using {M}arkov chains and their applications.
\newblock {\em Biometrika\/}~{\em 57\/}(1), 97--109.

\bibitem[\protect\citeauthoryear{Heyde and Johnstone}{Heyde and Johnstone}{1979}]{HeydeJohnstone1979}
Heyde, C.~C. and I.~M. Johnstone (1979).
\newblock On asymptotic posterior normality for stochastic processes.
\newblock {\em Journal of the Royal Statistical Society: Series B (Methodological)\/}~{\em 41}, 184--189.

\bibitem[\protect\citeauthoryear{Jeffreys}{Jeffreys}{1961}]{Je61-marginallikelihood}
Jeffreys, H. (1961).
\newblock {\em Theory of Probability\/} (3rd ed.).
\newblock Oxford, U.K.: Oxford University Press.

\bibitem[\protect\citeauthoryear{Klenke}{Klenke}{2013}]{Kl13-probabilitycourse}
Klenke, A. (2013).
\newblock {\em Probability theory: a comprehensive course}.
\newblock Springer Science \& Business Media.

\bibitem[\protect\citeauthoryear{Knuth and Szwarcfiter}{Knuth and Szwarcfiter}{1974}]{KnSz74-alltoporderings}
Knuth, D.~E. and J.~L. Szwarcfiter (1974).
\newblock A structured program to generate all topological sorting arrangements.
\newblock {\em Information Processing Letters\/}~{\em 2\/}(6), 153--157.

\bibitem[\protect\citeauthoryear{Lee and Robert}{Lee and Robert}{2016}]{LeRo16-evidence_mixture_models}
Lee, J.~E. and C.~P. Robert (2016).
\newblock Importance sampling schemes for evidence approximation in mixture models.
\newblock {\em Bayesian Analysis\/}~{\em 11\/}(2), 573 -- 597.

\bibitem[\protect\citeauthoryear{Lenk and DeSarbo}{Lenk and DeSarbo}{2000}]{LeDe00-MargLikstEstim_withconstraints}
Lenk, P.~J. and W.~S. DeSarbo (2000).
\newblock {B}ayesian inference for finite mixtures of generalized linear models with random effects.
\newblock {\em Psychometrika\/}~{\em 65\/}(1), 93--119.

\bibitem[\protect\citeauthoryear{Liver-Disorders}{Liver-Disorders}{2016}]{liver_disorders_60}
Liver-Disorders (2016).
\newblock {Liver Disorders}.
\newblock UCI Machine Learning Repository.
\newblock {DOI}: https://doi.org/10.24432/C54G67.

\bibitem[\protect\citeauthoryear{Llorente, Martino, Delgado, and Lopez-Santiago}{Llorente et~al.}{2023}]{Ll_et_al23-review_marglikestims}
Llorente, F., L.~Martino, D.~Delgado, and J.~Lopez-Santiago (2023).
\newblock Marginal likelihood computation for model selection and hypothesis testing: An extensive review.
\newblock {\em SIAM Review\/}~{\em 65}, 3--58.

\bibitem[\protect\citeauthoryear{Meng and Wong}{Meng and Wong}{1996}]{MeWo96-bridgesampling}
Meng, X.-L. and W.~H. Wong (1996).
\newblock Simulating ratios of normalizing constants via a simple identity: A theoretical exploration.
\newblock {\em Statistica Sinica\/}~{\em 6\/}(4), 831--860.

\bibitem[\protect\citeauthoryear{Metodiev, Irons, and Perrot-Dockès}{Metodiev et~al.}{2025}]{thamesmix}
Metodiev, M., N.~J. Irons, and M.~Perrot-Dockès (2025).
\newblock {\em thamesmix: Truncated Harmonic Mean Estimator of the Marginal Likelihood for Mixtures}.
\newblock R package version 0.1.1.

\bibitem[\protect\citeauthoryear{Metodiev, Irons, Perrot-Dockès, Latouche, and E.~Raftery}{Metodiev et~al.}{2025}]{Me25-supplement}
Metodiev, M., N.~J. Irons, M.~Perrot-Dockès, P.~Latouche, and A.~E.~Raftery (2025).
\newblock Supplement to "easily computed marginal likelihoods for multivariate mixture models using the thames estimator".

\bibitem[\protect\citeauthoryear{Metodiev, Perrot-Dock{\`e}s, Ouadah, Irons, Latouche, and Raftery}{Metodiev et~al.}{2024}]{Me_et_al24-thames}
Metodiev, M., M.~Perrot-Dock{\`e}s, S.~Ouadah, N.~J. Irons, P.~Latouche, and A.~E. Raftery (2024).
\newblock Easily computed marginal likelihoods from posterior simulation using the {THAMES} estimator.
\newblock {\em Bayesian Analysis\/}~{\em {(advance publication)}}, 1 -- 28.

\bibitem[\protect\citeauthoryear{Metropolis, Rosenbluth, Rosenbluth, Teller, and Teller}{Metropolis et~al.}{1953}]{Metropolis1953}
Metropolis, N., A.~W. Rosenbluth, M.~N. Rosenbluth, A.~H. Teller, and E.~Teller (1953).
\newblock Equation of state calculations by fast computing machines.
\newblock {\em Journal of Chemical Physics\/}~{\em 21}, 1087--1092.

\bibitem[\protect\citeauthoryear{Miller}{Miller}{2021}]{Miller2021}
Miller, J.~W. (2021).
\newblock Asymptotic normality, concentration, and coverage of generalized posteriors.
\newblock {\em The Journal of Machine Learning Research\/}~{\em 22}, 7598--7650.

\bibitem[\protect\citeauthoryear{Neal}{Neal}{1996}]{Ne96-temperedMCMC}
Neal, R.~M. (1996).
\newblock Sampling from multimodal distributions using tempered transitions.
\newblock {\em Statistics and computing\/}~{\em 6}, 353--366.

\bibitem[\protect\citeauthoryear{Neal}{Neal}{1999}]{Ne99-symmetry_problem_chibs}
Neal, R.~M. (1999).
\newblock Erroneous results in “marginal likelihood from the {G}ibbs output”.
\newblock {\em Unpublished manuscript\/}.

\bibitem[\protect\citeauthoryear{Nobile}{Nobile}{2004}]{No04-MixtureModelsDifferentSizeLink}
Nobile, A. (2004).
\newblock {On the posterior distribution of the number of components in a finite mixture}.
\newblock {\em The Annals of Statistics\/}~{\em 32\/}(5), 2044 -- 2073.

\bibitem[\protect\citeauthoryear{Nobile}{Nobile}{2007}]{No07-MargLikEmptyComponents}
Nobile, A. (2007).
\newblock Bayesian finite mixtures: a note on prior specification and posterior computation.
\newblock {\em arXiv preprint arXiv:0711.0458\/}.

\bibitem[\protect\citeauthoryear{Nowicki and Snijders}{Nowicki and Snijders}{2001}]{NoSn01-sbm}
Nowicki, K. and T.~A.~B. Snijders (2001).
\newblock Estimation and prediction for stochastic blockstructures.
\newblock {\em Journal of the American statistical association\/}~{\em 96\/}(455), 1077--1087.

\bibitem[\protect\citeauthoryear{Papastamoulis and Iliopoulos}{Papastamoulis and Iliopoulos}{2010}]{PaIl10-ECR_algo}
Papastamoulis, P. and G.~Iliopoulos (2010).
\newblock An artificial allocations based solution to the label switching problem in {B}ayesian analysis of mixtures of distributions.
\newblock {\em Journal of Computational and Graphical Statistics\/}~{\em 19\/}(2), 313--331.

\bibitem[\protect\citeauthoryear{Perrakis, Ntzoufras, and Tsionas}{Perrakis et~al.}{2014}]{Pe_et_al14-importance_sampling_mixtures}
Perrakis, K., I.~Ntzoufras, and E.~G. Tsionas (2014).
\newblock On the use of marginal posteriors in marginal likelihood estimation via importance sampling.
\newblock {\em Computational Statistics \& Data Analysis\/}~{\em 77}, 54--69.

\bibitem[\protect\citeauthoryear{Reichl}{Reichl}{2020}]{Re20-newthames_variation}
Reichl, J. (2020).
\newblock Estimating marginal likelihoods from the posterior draws through a geometric identity.
\newblock {\em Monte Carlo Methods and Applications\/}~{\em 26\/}(3), 205--221.

\bibitem[\protect\citeauthoryear{Richardson and Green}{Richardson and Green}{1997}]{RiGr97-datasets}
Richardson, S. and P.~J. Green (1997).
\newblock On {B}ayesian analysis of mixtures with an unknown number of components (with discussion).
\newblock {\em Journal of the Royal Statistical Society Series B: Statistical Methodology\/}~{\em 59\/}(4), 731--792.

\bibitem[\protect\citeauthoryear{Robert, Wraith, Goggans, and Chan}{Robert et~al.}{2009}]{RoWr09-thames_on_hpd_region}
Robert, C.~P., D.~Wraith, P.~M. Goggans, and C.-Y. Chan (2009).
\newblock Computational methods for {B}ayesian model choice.
\newblock In {\em AIP Conference Proceedings}, pp.\  251–262. AIP.

\bibitem[\protect\citeauthoryear{Rufo, Mart{\'\i}n, and P{\'e}rez}{Rufo et~al.}{2010}]{Ru_et_al10-clustered_marglikestims}
Rufo, M.~J., J.~Mart{\'\i}n, and C.~P{\'e}rez (2010).
\newblock New approaches to compute {B}ayes factor in finite mixture models.
\newblock {\em Computational statistics \& data analysis\/}~{\em 54\/}(12), 3324--3335.

\bibitem[\protect\citeauthoryear{Scrucca, Fraley, Murphy, and Raftery}{Scrucca et~al.}{2023}]{Lu_et_al23-mclust}
Scrucca, L., C.~Fraley, T.~B. Murphy, and A.~E. Raftery (2023).
\newblock {\em Model-Based Clustering, Classification, and Density Estimation Using {mclust} in {R}}.
\newblock Chapman and Hall/CRC.

\bibitem[\protect\citeauthoryear{Shen}{Shen}{2002}]{Shen2002}
Shen, X. (2002).
\newblock Asymptotic normality of semiparametric and nonparametric posterior distributions.
\newblock {\em Journal of the American Statistical Association\/}~{\em 97}, 222--235.

\bibitem[\protect\citeauthoryear{Sims, Waggoner, and Zha}{Sims et~al.}{2008}]{Si_et_al08-hpd_integral_approximation}
Sims, C.~A., D.~F. Waggoner, and T.~Zha (2008).
\newblock Methods for inference in large multiple-equation {M}arkov-switching models.
\newblock {\em Journal of Econometrics\/}~{\em 146\/}(2), 255--274.

\bibitem[\protect\citeauthoryear{{Stan Development Team}}{{Stan Development Team}}{2022}]{stan2022}
{Stan Development Team} (2022).
\newblock {RStan}: the {R} interface to {Stan}.
\newblock R package version 2.21.7.

\bibitem[\protect\citeauthoryear{Stephens}{Stephens}{1997}]{StPh97-BayesianMethodsForNormalMixtures}
Stephens, M. (1997).
\newblock {\em {B}ayesian methods for mixtures of normal distributions}.
\newblock Phd thesis, University of Oxford.
\newblock Available at \url{https://stephenslab.uchicago.edu/assets/papers/DPhilMS.pdf}.

\bibitem[\protect\citeauthoryear{Stephens}{Stephens}{2000}]{St00-Stephenslabelswitching}
Stephens, M. (2000).
\newblock Dealing with label switching in mixture models.
\newblock {\em Journal of the Royal Statistical Society: Series B (Statistical Methodology)\/}~{\em 62\/}(4), 795--809.

\bibitem[\protect\citeauthoryear{Turlach and Weingessel}{Turlach and Weingessel}{2019}]{BeWe19-quadprog}
Turlach, B.~A. and A.~Weingessel (2019).
\newblock {\em quadprog: Functions to Solve Quadratic Programming Problems}.
\newblock R package version 1.5-8.

\bibitem[\protect\citeauthoryear{Wang and Wong}{Wang and Wong}{1987}]{WaWo87-sbm}
Wang, Y.~J. and G.~Y. Wong (1987).
\newblock Stochastic blockmodels for directed graphs.
\newblock {\em Journal of the American Statistical Association\/}~{\em 82\/}(397), 8--19.

\bibitem[\protect\citeauthoryear{Wilcox}{Wilcox}{2011}]{WiRa11-kolmogorov_distance}
Wilcox, R.~R. (2011).
\newblock {\em Introduction to robust estimation and hypothesis testing}.
\newblock Academic press.

\end{thebibliography}

{\centering \huge{\textbf{Supplementary Material}}}

\newcommand{\suppAtext}{Supplement A: proofs}
\newcommand{\textAone}{A.1 Proving that an optimal truncation set is given by a HPD region}
\newcommand{\textAtwo}{A.2 Proving theoretical properties of symmetric estimators}
\newcommand{\textAthree}{A.3 Proving that the variance of the THAMES minimises an upper bound}
\newcommand{\textAfour}{A.4 Proving that we can easily compute the THAMES}
\newcommand{\textAfive}{A.5 Proving that we can avoid dealing with empty components}

\newcommand{\suppBtext}{Supplement B: details on the algorithm}
\newcommand{\textBone}{B.1 Choosing the tuning-parameters}
\newcommand{\textBtwo}{B.2 Choosing the grid for $\alpha$}
\newcommand{\textBthree}{B.3 Numerically verifying that $\xi_{g_1}=\xi_{g_2}$ lies in $E_{\hat{\theta},\hat{\Sigma},c}$}
\newcommand{\textBfour}{B.4 Quickly computing the THAMES}

\newcommand{\suppCtext}{Supplement C: details on the experiments}
\newcommand{\textCone}{C.1 Simulations on univariate Gaussian mixtures}
\newcommand{\textCtwo}{C.2 Simulations on multivariate Gaussian mixtures}
\newcommand{\textCthree}{C.3 Univariate datasets (galaxy, enzyme, acidity)}
\newcommand{\textCfour}{C.4 Multivariate datasets (Swiss banknotes, BUPA liver)}

\newcommand{\suppDtext}{Supplement D: exact and approximate density computations}
\newcommand{\textDone}{D.1 Exact marginal likelihood for a small dataset}
\newcommand{\textDtwo}{D.2
Highly accurate 
marginal likelihood 
approximation 
for an extremely well separated dataset}
\noindent
\begin{flushleft}
\textbf{Table of contents}

{\small\textbf{\suppAtext}} \hfill {\color{blue}\hyperref[sec: suppA]{\pageref{sec: suppA}}} \\
{\footnotesize \textAone} \hfill {\color{blue}\hyperref[ssec: Aone]{\pageref{ssec: Aone}}} \\
{\footnotesize \textAtwo} \hfill {\color{blue}\hyperref[ssec: Atwo]{\pageref{ssec: Atwo}}} \\
{\footnotesize \textAthree} \hfill {\color{blue}\hyperref[ssec: Athree]{\pageref{ssec: Athree}}} \\
{\footnotesize \textAfour} \hfill {\color{blue}\hyperref[ssec: Afour]{\pageref{ssec: Afour}}} \\
{\footnotesize \textAfive} \hfill {\color{blue}\hyperref[ssec: Afive]{\pageref{ssec: Afive}}} \\
\textbf{\suppBtext} \hfill {\color{blue}\hyperref[sec: suppB]{\pageref{sec: suppB}}}\\ 
{\footnotesize \textBone} \hfill {\color{blue}\hyperref[ssec: Bone]{\pageref{ssec: Bone}}} \\
{\footnotesize \textBtwo} \hfill {\color{blue}\hyperref[ssec: Btwo]{\pageref{ssec: Btwo}}} \\
{\footnotesize \textBthree} \hfill {\color{blue}\hyperref[ssec: Bthree]{\pageref{ssec: Bthree}}} \\
{\footnotesize \textBfour} \hfill {\color{blue}\hyperref[ssec: Bfour]{\pageref{ssec: Bfour}}} \\
\textbf{\suppCtext} \hfill {\color{blue}\hyperref[sec: suppC]{\pageref{sec: suppC}}} \\
{\footnotesize \textCone} \hfill {\color{blue}\hyperref[ssec: Cone]{\pageref{ssec: Cone}}} \\
{\footnotesize \textCtwo} \hfill {\color{blue}\hyperref[ssec: Ctwo]{\pageref{ssec: Ctwo}}} \\
{\footnotesize \textCthree} \hfill {\color{blue}\hyperref[ssec: Cthree]{\pageref{ssec: Cthree}}} \\
{\footnotesize \textCfour} \hfill {\color{blue}\hyperref[ssec: Cfour]{\pageref{ssec: Cfour}}} \\
\textbf{\suppDtext} \hfill {\color{blue}\hyperref[sec: suppD]{\pageref{sec: suppD}}} \\
{\footnotesize \textDone} \hfill {\color{blue}\hyperref[ssec: Done]{\pageref{ssec: Done}}} \\
{\footnotesize \textDtwo} \hfill {\color{blue}\hyperref[ssec: Dtwo]{\pageref{ssec: Dtwo}}}
\end{flushleft}
\newpage

\section*{\suppAtext \label{sec: suppA}}

Please note that the optimality results in Supplements {\color{blue}\hyperref[ssec: Aone]{A1}} and {\color{blue}\hyperref[ssec: Athree]{A3}} require additional assumptions on the distribution of the MCMC sample, as pointed out in the main document. However, the results on symmetry, unbiasedness, consistency and asymptotic normality in Supplement {\color{blue}\hyperref[ssec: Atwo]{A2}} make no such assumptions.

\subsection*{\textAone\label{ssec: Aone}}

The following theorem proves that the variance of the estimator \begin{align*}
    \hat{Z}^{-1}=\frac{1}{T/2}\sum_{\substack{t=T/2+1,\\\theta^{(t)}\in A}}^T\frac{1/V(A)}{\pi(\theta^{(t)})L(\theta^{(t)})},
\end{align*} is minimised by a $\alpha$-HPD region under conditions on which this region is well-defined, if the samples $\theta^{(1)},\dots,\theta^{(T)}$ independently follow the posterior distribution. The condition on the distribution of $\theta^{(1)},\dots,\theta^{(T)}$ is rarely satisfied exactly, but should be satisfied approximately, for example if a sufficiently large burn-in period is used and the sample is sufficiently heavily thinned.

\begin{theorem}
Assume that $\pi(\theta)L(\theta)$ is continuous on its support and that $\pi(\theta)L(\theta)$ is non-constant, in the sense that the volume of each level set $V(\{\theta|\pi(\theta)L(\theta)=q\})=0$ for all $q>0$. Let \begin{align}
    H_\alpha=\{\theta|\pi(\theta)L(\theta)>q_\alpha\},
\end{align} where $q_\alpha$ is the constant for which the posterior probability of $\{\theta\in H_\alpha\}$ is equal to $\alpha$. If the draws $\theta^{(1)},\dots,\theta^{(T)}\sim p(\theta|\mathcal{D})$ are independent, there exists a constant $\alpha\in(0,1]$ such that $A=H_\alpha$ minimises the variance of the THAMES over all measurable sets $A$, conditional on $\mathcal{D}$.
\end{theorem}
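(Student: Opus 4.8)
The plan is to reduce the problem to a scalar optimisation over the one-parameter family of HPD regions and then to argue that the infimum is attained. Since $\theta^{(1)},\dots,\theta^{(T)}$ are i.i.d.\ from $p(\theta\mid\mathcal D)=\pi(\theta)L(\theta)/Z$, the THAMES is an average of $T/2$ i.i.d.\ terms, so conditional on $\mathcal D$ I would first compute, for a single term $X=\mathbbm 1_A(\theta)/\big(V(A)\,\pi(\theta)L(\theta)\big)$, that $\mathbb E[X]=Z^{-1}$ (the unbiasedness already noted) and $\mathbb E[X^2]=Z^{-1}V(A)^{-2}\int_A(\pi L)^{-1}$. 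Hence
\begin{align*}
\operatorname{Var}\big(\hat Z^{-1}\big)=\frac{2}{T}\Big(\frac{1}{Z}\,J(A)-\frac{1}{Z^{2}}\Big),\qquad J(A):=\frac{1}{V(A)^{2}}\int_A\frac{d\theta}{\pi(\theta)L(\theta)}.
\end{align*}
As $Z$ is fixed given $\mathcal D$, minimising the variance is equivalent to minimising the functional $J$ over admissible truncation sets (measurable $A$ inside the support with $0<V(A)<\infty$, so the estimator stays unbiased), and $A=H_\alpha$ is deterministic under the conditioning.

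Step one is a rearrangement argument. For an arbitrary admissible $A$ with volume $v_0=V(A)$, I would invoke the bathtub principle (Hardy--Littlewood rearrangement): among all sets of volume $v_0$, the integral $\int_A(\pi L)^{-1}$ is minimised by collecting the points where $(\pi L)^{-1}$ is smallest, i.e.\ where $\pi L$ is largest, which is exactly a super-level set $H=\{\pi L>q\}$. The non-constant hypothesis (every level set has zero volume) makes $q\mapsto V(\{\pi L>q\})$ continuous and strictly decreasing, so by the intermediate value theorem I can pick $q$ with $V(H)=v_0$ exactly, with no leftover level set to distribute. Since $V(H)=V(A)$ and $\int_H(\pi L)^{-1}\le\int_A(\pi L)^{-1}$, we obtain $J(H)\le J(A)$, reducing the minimisation of $J$ to the family $\{H_\alpha:\alpha\in(0,1]\}$.

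Step two is to minimise the scalar map $\alpha\mapsto J(H_\alpha)$, equivalently $q\mapsto N(q)/v(q)^2$ with $N(q)=\int_{\{\pi L>q\}}(\pi L)^{-1}$ and $v(q)=V(\{\pi L>q\})$, and to show the infimum is attained in $(0,1]$. Both $N$ and $v$ are continuous under the zero-level-set assumption, so $J(H_\alpha)$ is continuous on $(0,1)$. The decisive input is the behaviour as $\alpha\to0$ (region shrinking, $q\uparrow\sup\pi L$): by Cauchy--Schwarz,
\begin{align*}
v(q)^2=\Big(\int_{\{\pi L>q\}}1\Big)^2\le\Big(\int_{\{\pi L>q\}}\pi L\Big)\Big(\int_{\{\pi L>q\}}\tfrac{1}{\pi L}\Big),
\end{align*}
so $J(H_\alpha)\ge\big(\int_{\{\pi L>q\}}\pi L\big)^{-1}$, and the posterior mass of the shrinking region tends to $0$ (dominated convergence, $\pi L$ integrable), forcing $J(H_\alpha)\to\infty$ as $\alpha\to0^+$. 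Together with continuity this confines the infimum to a compact subinterval $[\epsilon,1]$, on which a minimum is attained; if instead $J(H_\alpha)\to\infty$ also as $\alpha\to1^-$, the minimiser lies in the interior. Either way there is $\alpha^\star\in(0,1]$ with $H_{\alpha^\star}$ minimising $J$, hence the variance.

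The main obstacle is the attainment/existence argument at the endpoints rather than the rearrangement itself. I expect the delicate points to be handling the cases $\sup\pi L=\infty$ and $\int_S(\pi L)^{-1}=\infty$ uniformly (the Cauchy--Schwarz lower bound is precisely what lets both be treated without assuming $\pi L$ bounded), and verifying that the endpoint $\alpha=1$ is correctly covered, so that the infimum is a genuine minimum on $(0,1]$ and not merely a limit. A first-order analysis of $N(q)/v(q)^2$ could additionally pin down $\alpha^\star$ through the stationarity condition $2qN(q)=v(q)$, but it is not needed for the existence claim.
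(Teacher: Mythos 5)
Your proof is correct and its core coincides with the paper's: the same variance computation reduces the problem to minimising $J(A)=V(A)^{-2}\int_A(\pi L)^{-1}$, and the same volume-matching rearrangement shows that a super-level set of $\pi L$ with $V(H_\alpha)=V(A)$ satisfies $J(H_\alpha)\le J(A)$. The differences are worth noting. First, where you invoke the bathtub principle as a black box, the paper proves the rearrangement inequality from scratch by splitting $\int_A=\int_{A\cap H_\alpha}+\int_{A\setminus H_\alpha}$ and bounding $(\pi L)^{-1}$ by $q_\alpha^{-1}$ on the symmetric difference; both are fine, and your observation that the zero-volume-level-set hypothesis removes the need to split mass across a level set is exactly the point at which the two arguments meet. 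Second, and more substantively, you go further than the paper: the paper stops after establishing that for every $A$ \emph{some} $H_{\alpha(A)}$ dominates it (a $\forall A\,\exists\alpha$ statement), whereas the theorem as literally phrased asserts a single $\alpha^\star$ whose HPD region dominates all $A$ (an $\exists\alpha\,\forall A$ statement). Your Step two -- continuity of $\alpha\mapsto J(H_\alpha)$ plus the Cauchy--Schwarz bound $J(H_\alpha)\ge (Z\alpha)^{-1}\to\infty$ as $\alpha\to 0^+$ -- is precisely the missing attainment argument, and the Cauchy--Schwarz bound is a clean way to handle unbounded $\pi L$. The one loose end on your side is the endpoint $\alpha\to 1^-$ when the posterior support has infinite volume: there $H_1$ is inadmissible ($V(H_1)=\infty$), and you would need to rule out $J(H_\alpha)$ decreasing to an unattained infimum as $\alpha\uparrow 1$; your dichotomy only covers the case where $J$ blows up at that end. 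Since the paper does not address attainment at all, your argument is strictly more complete, but that endpoint case should be stated as an additional hypothesis or handled explicitly.
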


\begin{proof}
Let $A$ be an arbitrary set of non-zero, finite volume that is contained in the posterior support. Because of independence and unbiasedness, the variance of $\hat{Z}^{-1}$ is given by \begin{align*}
    \text{Var}[\hat{Z}^{-1}|\mathcal{D}]&=\text{Var}\left[\left.\frac{1}{T/2}\sum_{\substack{t=T/2+1,\\\theta^{(t)}\in A}}^T\frac{1/V(A)}{\pi(\theta^{(t)})L(\theta^{(t)})}\right|\mathcal{D}\right]\\&=\frac{1}{T/2}\text{Var}\left[\left.\frac{\mathds{1}_{A}(\theta^{(1)})/V(A)}{\pi(\theta^{(1)})L(\theta^{(1)})}\right|\mathcal{D}\right]\\&=\frac{1}{T/2}\text{E}\left[\left.\left(\frac{\mathds{1}_{A}(\theta^{(1)})/V(A)}{\pi(\theta^{(1)})L(\theta^{(1)})}\right)^2\right|\mathcal{D}\right]-\frac{Z^{-2}}{T/2}\\&=\frac{1}{T/2}\int_A\left(\frac{1/V(A)}{\pi(\theta)L(\theta)}\right)^2p(\theta|\mathcal{D})\;d\theta-\frac{Z^{-2}}{T/2}\\&=\frac{Z^{-2}}{T/2}\int_{A} \frac{1/V(A)^2}{p(\theta|\mathcal{D})}d\theta-\frac{Z^{-2}}{T/2},
\end{align*} where $\mathds{1}_A(\theta^{(1)})$ denotes the indicator function of $A$ evaluated at $\theta^{(1)}$. We are going to show that there exists a level $\alpha$ such that the THAMES has an equal- or lower variance under $H_\alpha$.

\textbf{Proving that there exists an $\alpha$ such that $H_{\alpha}$ is optimal}

Let $H_\alpha=\{\theta|p(\theta|\mathcal{D})\geq q_\alpha\}$ denote an $\alpha$-HPD region. Since $V(H_\alpha)$ is continuous, we can choose $\alpha$ such that \begin{align*}
    V(H_\alpha)=V(A).
\end{align*} We have just shown that the variance of the THAMES is, up to translation, proportional to \begin{align}\label{eq: var_thames}
    \text{Var}[\hat{Z}^{-1}|\mathcal{D}]\propto\int_{A} \frac{1/V(A)^2}{p(\theta|\mathcal{D})}d\theta.
\end{align} Comparing this to the same integral of the THAMES under $H_\alpha$ gives \begin{align*}
    \text{Var}[\hat{Z}^{-1}|\mathcal{D}]&\propto\frac{1}{V(A)^2}\int_{A} \frac{1}{p(\theta|\mathcal{D})}d\theta\\&=\frac{1}{V(H_\alpha)^2}\int_{A\cap H_\alpha} \frac{1}{p(\theta|\mathcal{D})}d\theta+\frac{1}{V(H_\alpha)^2}\int_{A\backslash H_\alpha} \frac{1}{p(\theta|\mathcal{D})}d\theta\\&\geq \frac{1}{V(H_\alpha)^2}\left(\int_{A\cap H_\alpha} \frac{1}{p(\theta|\mathcal{D})}d\theta+\frac{1}{q_\alpha}V(A\backslash H_\alpha)\right).
\end{align*} This is due to the fact that $A\backslash H_\alpha=\{\theta\in A|p(\theta|\mathcal{D})< q_\alpha\}.$ Using this and the fact that $V(A\backslash H_\alpha)=V(H_\alpha \backslash A)$ one can similarly conclude \begin{align*}
   &\frac{1}{V(H_\alpha)^2}\left(\int_{A\cap H_\alpha} \frac{1}{p(\theta|\mathcal{D})}d\theta+\frac{1}{q_\alpha}V(A\backslash H_\alpha)\right)=\\&\frac{1}{V(H_\alpha)^2}\left(\int_{A\cap H_\alpha} \frac{1}{p(\theta|\mathcal{D})}d\theta+\frac{1}{q_\alpha}V(H_\alpha\backslash A)\right)\geq\\&\frac{1}{V(H_\alpha)^2}\int_{A\cap H_\alpha}\frac{1}{p(\theta|\mathcal{D})}\;d\theta+\frac{1}{V(H_\alpha)^2}\int_{H_\alpha\backslash A} \frac{1}{p(\theta|\mathcal{D})}\;d\theta=\\&\frac{1}{V(H_\alpha)^2}\int_{H_\alpha} \frac{1}{p(\theta|\mathcal{D})}\;d\theta.
\end{align*} It follows that for any set $A$ with $V(A)\in(0,\infty)$ that is contained in the posterior support an $\alpha$ exists such that the variance of the THAMES is lower or equal when choosing $H_\alpha$ instead of $A$. This finishes the proof.

Please note that $H_\alpha$ is well-defined and $V(H_\alpha)$ is strictly increasing and continuous. This follows from the conditions given in the theorem. We could not find a direct proof in the literature (though the result is probably well-known), so we give one here.

\textbf{Proving that $H_\alpha$ is well-defined and $V(H_\alpha)$ is strictly increasing and continuous}

Let $q\in(0,\infty)$ be arbitrary. Let $P(\cdot|\mathcal{D})$ denote the probability measure of the posterior distribution. It is well-known that any measure is continuous, in the sense that \begin{align*}
    \lim_{i\to\infty}P(A_i|\mathcal{D})=P\left(\left.\bigcap_{i=1}^\infty A_i\right|\mathcal{D}\right)
\end{align*} for any decreasing series of sets $A_1\supseteq A_2\supseteq\dots$ \citep{Kl13-probabilitycourse}. Thus\begin{align*}
    &\lim_{\tilde{q}\uparrow q}P(\{\theta|\pi(\theta)L(\theta)>\tilde{q}\}|\mathcal{D})=P\left(\left.\bigcap_{\tilde{q}\uparrow q}\{\theta|\pi(\theta)L(\theta)>\tilde{q}\}\right|\mathcal{D}\right)\\&=P(\{\theta|\exists\tilde{q}>q:\pi(\theta)L(\theta)>\tilde{q}\}|\mathcal{D})=P(\{\theta|\pi(\theta)L(\theta)>q\}|\mathcal{D}),
\end{align*}for all $q\in\mathbb{R}$ due to the fact that $\{\theta|\pi(\theta)L(\theta)>\tilde{q}\}$ is contained in $\{\theta|\pi(\theta)L(\theta)>q\}$ for any $\tilde{q}>q$. Since $V(\{\theta|\pi(\theta)L(\theta)=\tilde{q}\})=0$, it is also the case that\begin{align*}
    \lim_{\tilde{q}\downarrow q}P(\{\theta|\pi(\theta)L(\theta)>\tilde{q}\}|\mathcal{D})&=1-\lim_{\tilde{q}\downarrow q}P(\{\theta|\pi(\theta)L(\theta)\leq\tilde{q}\}|\mathcal{D})\\&=1-\lim_{\tilde{q}\downarrow q}P(\{\theta|\pi(\theta)L(\theta)<\tilde{q}\}|\mathcal{D}),
\end{align*} and we can use the same technique as before to show that \begin{align*}
    \lim_{\tilde{q}\downarrow q}P(\{\theta|\pi(\theta)L(\theta)>\tilde{q}\}|\mathcal{D})=P(\{\theta|\pi(\theta)L(\theta)>q\}|\mathcal{D}).
\end{align*} Since we have just shown that the function $q\mapsto P(\{\theta|\pi(\theta)L(\theta)>q\}|\mathcal{D})$ is continuous, and since it tends to $0$ as $q\to\infty$, to $1$ as $q\downarrow0$, it follows that the equation \begin{align}\label{eq:quantile_equation}
    P(\{\theta|\pi(\theta)L(\theta)>q|\mathcal{D})=\alpha
\end{align} has at least one solution. In addition, the function is also strictly decreasing in the interval $(\inf_{\pi(\theta)L(\theta)>0}\pi(\theta)L(\theta),\sup_{\pi(\theta)L(\theta)>0}\pi(\theta)L(\theta))$,\\ since for any $\inf_{\pi(\theta)L(\theta)>0}\pi(\theta)L(\theta)<q_1<q_2<\sup_{\pi(\theta)L(\theta)>0}\pi(\theta)L(\theta)$\begin{align*}
    P(\{\theta|\pi(\theta)L(\theta)>q_1\}|\mathcal{D})&-P(\{\theta|\pi(\theta)L(\theta)>q_2\}|\mathcal{D})\\&=P(\{\theta|\pi(\theta)L(\theta)\in(q_1,q_2)\}|\mathcal{D})>0,
\end{align*} because the set $\{\theta|\pi(\theta)L(\theta)\in(q_1,q_2)\}$ has non-zero volume due to the fact that $\pi(\theta)L(\theta)$ is continuous and non-constant. It follows that Equation \eqref{eq:quantile_equation} has only one solution, defined as $q_\alpha$. We can use the exact same technique to show that the volume $V(H_\alpha)$ is strictly increasing and continuous, since the only property of $P(\cdot|\mathcal{D})$ that we used is that it is a measure, and the volume is also a measure.
\end{proof}

\subsection*{\textAtwo\label{ssec: Atwo}}

Let $\theta^{(1)},\dots,\theta^{(T)}$ denote the sample from the posterior and $\theta^{(1)\star},\dots,\theta^{(T)\star}$ denote the same sample, but relabelled. Also, let $Q$ be a symmetric estimator, in the sense that \begin{align*}
    Q(P_{j_1}(\theta^{(1)}),\dots,P_{j_T}(\theta^{(T)}))=Q(\theta^{(1)},\dots,\theta^{(T)}),
\end{align*} for any sequence of permutations $P_{j_1}(\theta^{(1)}),\dots,P_{j_T}(\theta^{(T)})$ and for all possible values of the posterior sample. We are going to show that $Q(\theta^{(1)},\dots,\theta^{(T)})$ and $Q(\theta^{(1)\star},\dots,\theta^{(T)\star})$ have the same distribution. This implies that any symmetric estimator that is consistent for a sample of the posterior is also going to be consistent for the relabelling of that sample and, on top of that, all theoretical properties of its distribution, such as asymptotic normality, are preserved when using the relabelled posterior sample.

\begin{theorem}\label{thm-symmetric_estimators}
$Q(\theta^{(1)},\dots,\theta^{(T)})$ and $Q(\theta^{(1)\star},\dots,\theta^{(T)\star})$ have the same distribution.\end{theorem}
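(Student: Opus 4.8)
The plan is to prove the stronger statement that the two statistics coincide as random variables, from which equality in distribution is immediate. The guiding observation is that relabelling acts on the sample by applying, to each draw $\theta^{(t)}$ separately, a permutation of its component labels, so that $\theta^{(t)\star}=P_{\sigma_t}(\theta^{(t)})$ for some permutations $\sigma_1,\dots,\sigma_T$. The only genuine subtlety is that the permutations produced by a relabelling algorithm (Stephens' or ECR) are not fixed a priori but are chosen as functions of the entire sample $(\theta^{(1)},\dots,\theta^{(T)})$.

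First I would emphasise that the symmetry hypothesis on $Q$ is a pointwise identity, quantified over every tuple of permutations $(j_1,\dots,j_T)$ and over every attainable value of the posterior sample. This universal quantification is exactly what allows the permutations to be replaced, after the fact, by data-dependent ones. Concretely, I would fix an arbitrary realisation of the sample; for this realisation the relabelling permutations $\sigma_1,\dots,\sigma_T$ take definite values, so the symmetry hypothesis may be invoked with $j_t=\sigma_t$ and the observed draws, giving
$$Q(\theta^{(1)\star},\dots,\theta^{(T)\star})=Q(P_{\sigma_1}(\theta^{(1)}),\dots,P_{\sigma_T}(\theta^{(T)}))=Q(\theta^{(1)},\dots,\theta^{(T)}),$$
where the left equality is the definition of relabelling and the right one is the symmetry of $Q$.

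Since this identity holds for every realisation, the random variables $Q(\theta^{(1)\star},\dots,\theta^{(T)\star})$ and $Q(\theta^{(1)},\dots,\theta^{(T)})$ are almost surely equal, and hence share the same distribution; this already yields the stated consequences for consistency and asymptotic normality under relabelling. The main obstacle is conceptual rather than computational: one must resist reading the symmetry property as applying only to a single fixed permutation, and instead use its quantification over all sample values to justify substituting the realised, data-dependent relabelling permutations pointwise. If the relabelling algorithm uses auxiliary randomness, the same argument carries through verbatim after conditioning on that randomness.
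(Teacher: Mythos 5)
Your proposal is correct and is essentially the paper's own argument: both fix a realisation, note that the relabelled sample is obtained by applying some (data-dependent) permutations $P_{j_1},\dots,P_{j_T}$ to the draws, invoke the pointwise symmetry of $Q$ to conclude the two statistics coincide for every realisation, and hence in distribution. Your additional remarks on the universal quantification over permutations and on auxiliary randomness only make explicit what the paper leaves implicit.
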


\begin{proof}

Since $\theta^{(1)\star},\dots,\theta^{(T)\star}$ is a relabelling of $\theta^{(1)},\dots,\theta^{(T)}$, this means that for any realisation of $\theta^{(1)},\dots,\theta^{(T)}$ there exists a sequence of permutations $P_{j_1},\dots,P_{j_T}$ such that \begin{align*}
    (\theta^{(1)\star},\dots,\theta^{(T)\star})=(P_{j_1}(\theta^{(1)}),\dots,P_{j_T}(\theta^{(T)})).
\end{align*} This immediately implies \begin{align*}
    Q(\theta^{(1)\star},\dots,\theta^{(T)\star})=Q(\theta^{(1)},\dots,\theta^{(T)})
\end{align*} due to symmetry. Since this is true for every realisation of the posterior, we have equality in distribution.
\end{proof}

\subsection*{Proving that the THAMES is consistent and asymptotically normal}

The THAMES is defined as \begin{align*}
    \hat{Z}^{-1}_{\text{S}}(G)=\frac{1}{G!}\sum_{o=1}^{G!}\frac{1}{T/2}\sum_{\substack{t=T/2+1,\\P_o(\theta^{(t)\star})\in B_{\hat{\theta},\hat{\Sigma},c,\alpha}}}^T\frac{1/V(B_{\hat{\theta},\hat{\Sigma},c,\alpha})}{\pi(\theta^{(t)\star})L(\theta^{(t)\star})}.
\end{align*} This is a reciprocal importance sampling (RIS) estimator \citep{GeDe94-RIS}, where an RIS estimator on the posterior sample $\theta^{(T/2+1)},\dots,\theta^{(T)}$ is defined as $\frac{1}{T/2}\sum_{t=T/2+1}^{T}\frac{h(\theta^{(t)})}{\pi(\theta^{(t)})L(\theta^{(t)})},$ with $h$ denoting any density function specified independently from this sample whose support is within the posterior support. It is well-known that this type of estimator is unbiased on the scale of the reciprocal marginal likelihood, consistent, asymptotically normal and has finite variance as long as the condition \begin{align}\label{eq:ris_condition}
    \int \frac{h(\theta)^2}{L(\theta)\pi(\theta)}\;d\theta<\infty
\end{align} holds. The symmetric THAMES is in fact a RIS estimator with \begin{align*}
    h(\theta^{(t)\star})=\frac{1}{G!}\sum^{G!}_{o=1}\frac{\mathds{1}_{B_{\hat{\theta},\hat{\Sigma},c,\alpha}}(P_o(\theta^{(t)\star}))}{V(B_{\hat{\theta},\hat{\Sigma},c,\alpha})}
\end{align*} being a mixture of $G!$ uniform distributions on $B_{\hat{\theta},\hat{\Sigma},c,\alpha}$ and $\mathds{1}_{B_{\hat{\theta},\hat{\Sigma},c,\alpha}}$ denoting the indicator function on $B_{\hat{\theta},\hat{\Sigma},c,\alpha}$. This follows from changing the order of summation:\begin{align*}
    &\frac{1}{T/2}\sum_{t=T/2+1}^{T}\frac{h(\theta^{(t)\star})}{\pi(\theta^{(t)\star})L(\theta^{(t)\star})}=\frac{1}{T/2}\sum_{t=T/2+1}^{T}\frac{\frac{1}{G!}\sum^{G!}_{o=1}\frac{\mathds{1}_{B_{\hat{\theta},\hat{\Sigma},c,\alpha}}(P_o(\theta^{(t)\star}))}{V(B_{\hat{\theta},\hat{\Sigma},c,\alpha})}}{\pi(\theta^{(t)\star})L(\theta^{(t)\star})}\\&=\frac{1}{G!}\sum_{o=1}^{G!}\frac{1}{T/2}\sum_{\substack{t=T/2+1,\\P_o(\theta^{(t)\star})\in B_{\hat{\theta},\hat{\Sigma},c,\alpha}}}^T\frac{1/V(B_{\hat{\theta},\hat{\Sigma},c,\alpha})}{\pi(\theta^{(t)\star})L(\theta^{(t)\star})}=\hat{Z}^{-1}_{\text{S}}(G)
\end{align*}

Thus, the posterior density is bounded from below whenever $h$ is not 0 and the support of $h$ is finite, so Equation \eqref{eq:ris_condition} is satisfied due to boundedness. In addition, while the symmetric THAMES is defined on the relabelled, not the full posterior sample, we already showed that this makes no difference for a symmetric estimator. It follows that the THAMES is indeed unbiased on the scale of the reciprocal marginal likelihood, consistent and asymptotically normally distributed.

\subsection*{\textAthree\label{ssec: Athree}}

Consider a convex combination of truncated harmonic mean estimators, \begin{align}\label{eq: mix_thames}
    \hat{Z}^{-1}_{\text{Mix}}=\sum_{k=1}^K\omega_k\frac{1}{T/2}\sum_{\substack{t=T/2+1,\\\theta^{(t)}\in A_k}}^T\frac{1/V(A_k)}{\pi(\theta^{(t)})L(\theta^{(t)})},
\end{align} where $\omega_1,\dots,\omega_K$ are weights that sum to 1 and $A_1,\dots,A_K$ are truncation sets of the different estimators. Optimality results were obtained in \citet{Me_et_al24-thames} in the case that $K=1$, under the assumption of normality of the posterior. We are going to present results in the case that the posterior is a mixture of normal distributions. This case is important because any density can be approximated by such a mixture, under few assumptions \citep{Ba10-approximate_by_Gaussian_mixtures}.

It is easy to show that the symmetric THAMES corresponds to Equation \eqref{eq: mix_thames} with the optimal parameters that we suggest so long as the posterior density of the relabelled sample is unimodal. This isn't necessarily the case, since the phenomenon of genuine multimodality, that is, multimodality in the posterior density of the relabelled sample, can be observed \citep{GrFr09-labelswitching_genuine_multimodality}. In such cases, where the number of genuine modes is larger than 1, it may be useful to average different versions of the symmetric THAMES, each centered in a different genuine mode, by applying mclust to the relabelled posterior sample. However, the assumption of only $G!$ symmetric modes may very well be justified by the large sample behaviour of the posterior: given an identifiable (up to permutations) mixture model, the posterior itself can often be observed to converge to a mixture of $G!$ multivariate Gaussian distribution, as pointed out in \citet{Fr06-FiniteMixtureAndMarkovSwitchingModels}. This is exactly the condition under which we constructed the THAMES for multimodality.

\begin{theorem}
Suppose that the posterior density $p(\theta|\mathcal{D})$ is a Gaussian mixture, $p(\theta|\mathcal{D})=\sum^K_{k=1}a_k\textup{MVN}_R(\theta;m_k,S_k),$ where $\text{MVN}_R(\theta;m_k,S_k)$ denotes the Gaussian law of dimension $R$ with mean $m_k$ and covariance matrix $S_k$. If the draws $\theta^{(1)},\dots,\theta^{(T)}$ from the posterior distribution are independent, an upper bound on the variance of $\hat{Z}^{-1}_{\text{Mix}}$, conditional on $\mathcal{D}$, is minimised by $A_k=E_{m_k,S_k,c_R},\quad \omega_k=a_k,$ where $c_R$ is the minimising constant determined in \citet{Me_et_al24-thames}, which is asymptotically equal to $\sqrt{R+1}$, i.e.,  $\lim_{R\to\infty}\frac{c_R}{\sqrt{R+1}}=1$.
\end{theorem}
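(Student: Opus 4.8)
The plan is to reduce the variance of $\hat{Z}^{-1}_{\text{Mix}}$ to a single integral, bound that integral from above by a sum that decouples across components, and then minimise the decoupled bound in two separable stages.

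First I would compute the variance exactly. Writing $h_k=\mathds{1}_{A_k}/V(A_k)$ and $h(\theta)=\sum_{k=1}^K\omega_k h_k(\theta)$ (a mixture of uniform densities), and using $\pi(\theta)L(\theta)=Z\,p(\theta|\mathcal{D})$ together with the independence of the draws and unbiasedness on the reciprocal scale, the same manipulation as in the proof of Supplement~A.1 gives
\begin{align*}
\textup{Var}[\hat{Z}^{-1}_{\text{Mix}}|\mathcal{D}]=\frac{Z^{-2}}{T/2}\left(\int\frac{h(\theta)^2}{p(\theta|\mathcal{D})}\,d\theta-1\right),
\end{align*}
so it suffices to minimise $\int h^2/p$: the prefactor and the additive $-1$ are fixed, since unbiasedness holds for every admissible choice of the $A_k$ and $\omega_k$ (each $h$ is a density supported in the posterior support $\mathbb{R}^R$).

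The key step is the upper bound. Since $(u,v)\mapsto u^2/v$ is jointly convex on $\mathbb{R}\times(0,\infty)$, Jensen's inequality applied with the posterior weights $a_k$, writing $h=\sum_k a_k(\tfrac{\omega_k}{a_k}h_k)$ and $p=\sum_k a_k\,\textup{MVN}_R(\cdot;m_k,S_k)$, yields
\begin{align*}
\frac{h(\theta)^2}{p(\theta|\mathcal{D})}\le\sum_{k=1}^K\frac{\omega_k^2}{a_k}\,\frac{h_k(\theta)^2}{\textup{MVN}_R(\theta;m_k,S_k)}.
\end{align*}
Integrating and using $\mathds{1}_{A_k}^2=\mathds{1}_{A_k}$ gives the decoupled upper bound $U(\{\omega_k\},\{A_k\})=\sum_k\frac{\omega_k^2}{a_k}\cdot\frac{1}{V(A_k)^2}\int_{A_k}\textup{MVN}_R(\theta;m_k,S_k)^{-1}\,d\theta$. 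Recognising that pairing the $a_k$-weighted decomposition of $h$ with that of $p$ produces the $\omega_k^2/a_k$ structure, which is exactly what makes $\omega_k=a_k$ optimal, is the crux of the argument and the step I expect to be the main obstacle; everything afterward is a separable minimisation.

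Finally I would minimise $U$ in two stages. For fixed weights, each summand is minimised over $A_k$ independently, and the inner functional $\frac{1}{V(A_k)^2}\int_{A_k}\textup{MVN}_R(\cdot;m_k,S_k)^{-1}$ is precisely the variance-proportional quantity of Supplement~A.1 with the Gaussian $\textup{MVN}_R(\cdot;m_k,S_k)$ playing the role of the posterior; by that result its minimiser is an HPD region of the Gaussian, i.e.\ an ellipsoid $E_{m_k,S_k,c}$, and by the single-component optimisation of \citet{Me_et_al24-thames} the optimal radius is $c_R$ with $c_R/\sqrt{R+1}\to1$. A change of variables $\theta=m_k+S_k^{1/2}u$ shows the minimal value $\frac{1}{V(A_k)^2}\int_{A_k}\textup{MVN}_R^{-1}=I(c_R)$ is the same for every $k$, the factor $|S_k|$ cancelling between numerator and denominator. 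Substituting back, $U=I(c_R)\sum_k\omega_k^2/a_k$, and Cauchy--Schwarz gives $\sum_k\omega_k^2/a_k\ge(\sum_k\omega_k)^2/\sum_k a_k=1$ with equality iff $\omega_k=a_k$. Since both equalities are attained simultaneously, $U$ is minimised at $A_k=E_{m_k,S_k,c_R}$ and $\omega_k=a_k$, which proves the claim.
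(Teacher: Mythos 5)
Your proposal is correct, and it reproduces the paper's overall architecture: compute the variance exactly, decouple it into a sum over components via an upper bound, minimise each summand over $A_k$ by reducing to the single-Gaussian case of \citet{Me_et_al24-thames}, note via the change of variables $\theta=m_k+S_k^{1/2}u$ that the minimal per-component integral is independent of $k$, and finish by minimising $\sum_k\omega_k^2/a_k$ at $\omega_k=a_k$. The genuine difference is in the decoupling step, which you correctly identify as the crux. The paper gets there in three moves: Cauchy--Schwarz on the cross terms $\textup{E}[\mathds{1}_{A_k\cap A_j}\cdots]$, a QM--AM inequality to turn $\bigl(\sum_k\sqrt{x_k}\bigr)^2$ into $K\sum_k x_k$, and then the pointwise bound $p(\theta|\mathcal{D})\geq a_k\,\textup{MVN}_R(\theta;m_k,S_k)$ applied term by term. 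You instead apply Jensen's inequality once to the jointly convex perspective function $(u,v)\mapsto u^2/v$, writing $h=\sum_k a_k\bigl(\tfrac{\omega_k}{a_k}h_k\bigr)$ against $p=\sum_k a_k\,\textup{MVN}_R(\cdot;m_k,S_k)$, which lands directly on
\begin{align*}
\int\frac{h(\theta)^2}{p(\theta|\mathcal{D})}\,d\theta\;\le\;\sum_{k=1}^K\frac{\omega_k^2}{a_k}\cdot\frac{1}{V(A_k)^2}\int_{A_k}\frac{d\theta}{\textup{MVN}_R(\theta;m_k,S_k)}.
\end{align*}
This is the paper's bound without the factor of $K$, so your route is both shorter and strictly sharper; since the two bounds differ only by a positive constant on the part depending on $(A_k,\omega_k)$, they have the same minimiser and both establish the theorem as stated (which only claims that \emph{an} upper bound is minimised at the given choice). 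Your closing Cauchy--Schwarz argument $\sum_k\omega_k^2/a_k\ge\bigl(\sum_k\omega_k\bigr)^2/\sum_k a_k=1$, with equality iff $\omega_k=a_k$, is the same conclusion the paper reaches by minimising the quadratic mean.
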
\begin{proof}
Because of independence, unbiasedness, the Cauchy-Schwartz and the QM-AM inequality, the variance of $\hat{Z}^{-1}_{\text{Mix}}$ is given by  \begin{align*}
    &\text{Var}[\hat{Z}^{-1}_{\text{Mix}}|\mathcal{D}]=\text{Var}\left[\left.\sum_{k=1}^K\omega_k\frac{1}{T/2}\sum_{\substack{t=T/2+1,\\\theta^{(t)}\in A_k}}^T\frac{1/V(A_k)}{\pi(\theta^{(t)})L(\theta^{(t)})}\right|\mathcal{D}\right]\stackrel{\text{i.i.d}}=\\
    &\frac{1}{T/2}\text{Var}\left[\left.\sum_{k=1}^K\omega_k\frac{\mathds{1}_{A_k}(\theta^{(1)})/V(A_k)}{\pi(\theta^{(1)})L(\theta^{(1)})}\right|\mathcal{D}\right]\stackrel{\text{unbiased}}=\\
    &\frac{1}{T/2}\text{E}\left[\left.\left(\sum_{k=1}^K\omega_k\frac{\mathds{1}_{A_k}(\theta^{(1)})/V(A_k)}{\pi(\theta^{(1)})L(\theta^{(1)})}\right)^2\right|\mathcal{D}\right]-\frac{Z^{-2}}{T/2}\stackrel{\text{linearity}}=\\
    &\frac{1}{T/2}\sum_{j,k=1}^K\omega_k\omega_j\text{E}\left[\left.\frac{\mathds{1}_{A_k\cap A_j}(\theta^{(1)})/(V(A_k)V(A_j))}{(\pi(\theta^{(1)})L(\theta^{(1)}))^2}\right|\mathcal{D}\right]-\frac{Z^{-2}}{T/2}\stackrel{\text{Cauchy Schwartz}}\leq
    \\&\frac{1}{T/2}\sum_{j,k=1}^K\sqrt{E\left[\left.\omega_k^2\frac{\mathds{1}_{A_k}(\theta^{(1)})/V(A_k)^2}{(\pi(\theta^{(1)})L(\theta^{(1)}))^2}\right|\mathcal{D}\right]E\left[\left.\omega_j^2\frac{\mathds{1}_{A_j}(\theta^{(1)})/V(A_j)^2}{(\pi(\theta^{(1)})L(\theta^{(1)}))^2}\right|\mathcal{D}\right]}\\&-\frac{Z^{-2}}{T/2}\stackrel{\text{QM-AM}}\leq
    \\&\frac{K^2}{T/2}\sqrt{\frac{1}{K^2}\sum_{j,k=1}^K\omega_k^2\omega_j^2E\left[\left.\frac{\mathds{1}_{A_k}(\theta^{(1)})/V(A_k)^2}{(\pi(\theta^{(1)})L(\theta^{(1)}))^2}\right|\mathcal{D}\right]E\left[\left.\frac{\mathds{1}_{A_j}(\theta^{(1)})/V(A_j)^2}{(\pi(\theta^{(1)})L(\theta^{(1)}))^2}\right|\mathcal{D}\right]}\\&-\frac{Z^{-2}}{T/2}=\\&\frac{K^2}{T/2}\left(\frac{1}{K}\sum_{k=1}^K\omega_k^2E\left[\left.\frac{\mathds{1}_{A_k}(\theta^{(1)})/V(A_k)^2}{(\pi(\theta^{(1)})L(\theta^{(1)}))^2}\right|\mathcal{D}\right]\right)-\frac{Z^{-2}}{T/2}=\\&\frac{K}{T/2}\sum_{k=1}^K\omega_k^2E\left[\left.\frac{\mathds{1}_{A_k}(\theta^{(1)})/V(A_k)^2}{(\pi(\theta^{(1)})L(\theta^{(1)}))^2}\right|\mathcal{D}\right]-\frac{Z^{-2}}{T/2}.
\end{align*}

We can use the fact that $p(\theta|\mathcal{D})\geq a_k\text{MVN}_d(\theta;m_k,S_k)$ for any $k$ to get \begin{align*}
    &\frac{K}{T/2}\sum_{k=1}^K\omega_k^2E\left[\left.\frac{\mathds{1}_{A_k}(\theta^{(1)})/V(A_k)^2}{(\pi(\theta^{(1)})L(\theta^{(1)}))^2}\right|\mathcal{D}\right]-\frac{Z^{-2}}{T/2}=\\&\frac{K}{T/2}\sum_{k=1}^K\omega_k^2\int_{A_k}\frac{p(\theta|\mathcal{D})/V(A_k)^2}{(\pi(\theta)L(\theta))^2}\;d\theta-\frac{Z^{-2}}{T/2}=\\&\frac{K}{T/2}\sum_{k=1}^K\omega_k^2\int_{A_k}\frac{p(\theta|\mathcal{D})/V(A_k)^2}{(Zp(\theta|\mathcal{D}))^2}\;d\theta-\frac{Z^{-2}}{T/2}=\\&\frac{KZ^{-2}}{T/2}\sum_{k=1}^K\omega_k^2\int_{A_k}\frac{1/V(A_k)^2}{p(\theta|\mathcal{D})}\;d\theta-\frac{Z^{-2}}{T/2}\leq\\&\frac{KZ^{-2}}{T/2}\sum_{k=1}^K\frac{\omega_k^2}{a_k}\int_{A_k}\frac{1/V(A_k)^2}{\text{MVN}_R(\theta;m_k,S_k)}\;d\theta-\frac{Z^{-2}}{T/2}.
\end{align*} $\text{MVN}_R(\theta;m_k,S_k)$ is itself a density, and the integral above is exactly the integral found in the proof of Theorem 1, Equation \eqref{eq: var_thames}. Thus, this integral is proportional to the variance of a THAMES estimator for which the posterior density is exactly normal. This case has been dealt with in \citet[Theorem 1]{Me_et_al24-thames}, and the integral is indeed minimised by $E_{m_k,S_k,c_R}$. The transformation formula implies that the integral is independent of $k$ in this case, since \begin{align*}
    &\frac{K}{T/2}\sum_{k=1}^K\frac{\omega_k^2}{a_k}\int_{E_{m_k,S_k,c_R}}\frac{Z^{-2}/V(E_{m_k,S_k,c_R})^2}{\text{MVN}_R(\theta;m_k,S_k)}\;d\theta-\frac{Z^{-2}}{T/2}=\\&\frac{K}{T/2}\sum_{k=1}^K\frac{\omega_k^2}{a_k}\int_{E_{0_R,I_R,c_R}}\frac{Z^{-2}/V(E_{0_R,I_R,c_R})^2}{\text{MVN}_R(\theta;0_R,I_R)}\;d\theta-\frac{Z^{-2}}{T/2}=\\&\frac{K}{T/2}\int_{E_{0_R,I_R,c_R}}\frac{Z^{-2}/V(E_{0_R,I_R,c_R})^2}{\text{MVN}_R(\theta;0_R,I_R)}\;d\theta\left(\sum_{k=1}^K\frac{\omega_k^2}{a_k}\right)-\frac{Z^{-2}}{T/2},
\end{align*} and the quadratic mean is minimised by $\omega_k=a_k$.
\end{proof}

\subsection*{\textAfour\label{ssec: Afour}}

\begin{theorem}
Let $\psi_1,\dots,\psi_{G!}$ denote the $G!$ different ordering algorithms that map $\theta$ to different permutations of the ordering constraint \begin{align*}
    W(\xi_1)\leq\dots\leq W(\xi_G),
\end{align*} respectively, by permuting its labels. Let $\Delta$ denote the adjacency matrix defined by $\Delta_{g_1,g_2}=1$ if $W(\xi_{g_1})< W(\xi_{g_2})$ for all $\xi_{g_1},\xi_{g_2}$ that satisfy the ellipsoidal constraint on $E_{\hat{\theta},\hat{\Sigma},c}$, and let $\Omega$ denote the index set over the collection of all topological orderings on the  graph defined by $\Delta$. The THAMES is equal to \begin{align}\label{eq:efficient_thames}
     \hat{Z}^{-1}_{\textup{S}}(G)=\frac{1}{G!}\sum_{o\in \Omega}\frac{1}{T/2}\sum_{\substack{t=T/2+1\\\psi_o(\theta^{(t)\star})\in B_{\hat{\theta},\hat{\Sigma},c,\alpha}}}^T\frac{1/V(B_{\hat{\theta},\hat{\Sigma},c,\alpha})}{\pi(\theta^{(t)\star})L(\theta^{(t)\star})}.
\end{align}
\end{theorem}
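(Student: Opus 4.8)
The plan is to abbreviate $B=B_{\hat{\theta},\hat{\Sigma},c,\alpha}$ and $E=E_{\hat{\theta},\hat{\Sigma},c}$, and to reduce the claimed identity to a single pointwise statement about indicator sums, after which the argument splits cleanly into two independent steps. The crucial preliminary observation is that the prior and likelihood are label-invariant, so $\pi(P_o(\theta))L(P_o(\theta))=\pi(\psi_o(\theta))L(\psi_o(\theta))=\pi(\theta)L(\theta)$; consequently the factor $\frac{1/V(B)}{\pi(\theta^{(t)\star})L(\theta^{(t)\star})}$ is common to every summand indexed by $o$. Interchanging the two sums in the defining expression \eqref{eq: symmetric_thames_intro}, the estimator is fully determined by the quantities $\sum_{o=1}^{G!}\mathds{1}_{B}(P_o(\theta^{(t)\star}))$, and it suffices to prove, for each fixed sample point $\theta^{(t)\star}$ with pairwise distinct values $W(\xi_1),\dots,W(\xi_G)$ (ties occur with posterior probability zero and may be resolved by any fixed rule),
\begin{align*}
\sum_{o=1}^{G!}\mathds{1}_{B}(P_o(\theta^{(t)\star}))=\sum_{o\in\Omega}\mathds{1}_{B}(\psi_o(\theta^{(t)\star})).
\end{align*}

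First I would treat the replacement of the $P_o$ by the $\psi_o$ without restricting the index set. The collection $\{P_1(\theta),\dots,P_{G!}(\theta)\}$ is precisely the orbit of $\theta$ under relabelling. The subspaces $J_1,\dots,J_{G!}$ obtained by permuting the ordering constraint \eqref{eq:identifiable_subset} partition the parameter space up to the measure-zero tie set, and each $J_o$ contains exactly one relabelling of $\theta$, namely $\psi_o(\theta)$. Hence $o\mapsto\psi_o(\theta)$ is a bijection onto the same orbit, so $\{\psi_1(\theta),\dots,\psi_{G!}(\theta)\}$ and $\{P_1(\theta),\dots,P_{G!}(\theta)\}$ coincide as multisets. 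Summing the indicator of $B$ over either collection therefore yields the same value, giving $\sum_{o=1}^{G!}\mathds{1}_{B}(P_o(\theta))=\sum_{o=1}^{G!}\mathds{1}_{B}(\psi_o(\theta))$.

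Second I would show that every $o\notin\Omega$ contributes zero. Since $B\subseteq E$, a nonzero summand requires $\psi_o(\theta)\in E$. If $o\notin\Omega$, the ordering enforced by $\psi_o$ is not a topological ordering of the graph defined by $\Delta$, so there is a pair $(g_1,g_2)$ with $\Delta_{g_1,g_2}=1$ whose prescribed positions are reversed, i.e.\ $\psi_o$ forces $W(\xi_{g_2}^{\psi_o})\le W(\xi_{g_1}^{\psi_o})$. But $\Delta_{g_1,g_2}=1$ means that $W(\xi_{g_1})<W(\xi_{g_2})$ holds at every point of $E$, which would be violated if $\psi_o(\theta)$ lay in $E$. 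Thus $\psi_o(\theta)\notin E\supseteq B$, its indicator vanishes, and the full $G!$-fold sum over the $\psi_o$ collapses to the sum over $\Omega$. Chaining this with the previous step proves the pointwise identity, and hence \eqref{eq:efficient_thames}.

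The step I expect to be the main obstacle is the bookkeeping in the first part: making fully rigorous that the fixed maps $\psi_1,\dots,\psi_{G!}$ realise the entire relabelling orbit of each $\theta$ exactly once. This rests on the ordering subspaces $J_o$ forming a genuine partition and on a careful treatment of the tie set $\{W(\xi_{g_1})=W(\xi_{g_2})\}$; once this correspondence is secured, the vanishing argument of the second step follows directly from the defining property of $\Delta$ together with the containment $B\subseteq E$.
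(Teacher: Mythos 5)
Your proposal is correct and follows essentially the same route as the paper's own proof: the key identity $\sum_{o}\mathds{1}_{B}(P_o(\theta))=\sum_{o}\mathds{1}_{B}(\psi_o(\theta))$ (both collections enumerate the full relabelling orbit) followed by the observation that any $o\notin\Omega$ reverses some pair with $\Delta_{g_1,g_2}=1$ and hence lands outside $E_{\hat{\theta},\hat{\Sigma},c}\supseteq B_{\hat{\theta},\hat{\Sigma},c,\alpha}$, so its contribution vanishes. Your additional care about the permutation-invariance of $\pi(\theta)L(\theta)$ and the measure-zero tie set is a minor sharpening of points the paper leaves implicit, not a different argument.
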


The role of the function $W$ can be explained via Figure \ref{fig:vis_galaxy}: it shows the different values of $(\mu_1^{(t)},\sigma_1^{(t)}),\dots,(\mu_G^{(t)},\sigma_G^{(t)})$ for an MCMC simulation from the posterior of size $T=10000$ in conjuction with the values of $W$. Ordering by the means would not be enough in the case that $G=2$, since they overlap, although the variances do not. Ordering by the variances would not be enough when $G=3$ for the converse reason. However, the values of $W$ differ for most points whenever they are in a different component. The exception is component 4 in the case that $G=4$: it is superfluous, and was thus not included into the calculation of $W$.

\begin{proof}

\begin{figure}
    \centering
\begin{tabular}{c}\\
     \includegraphics[width=0.97\textwidth]{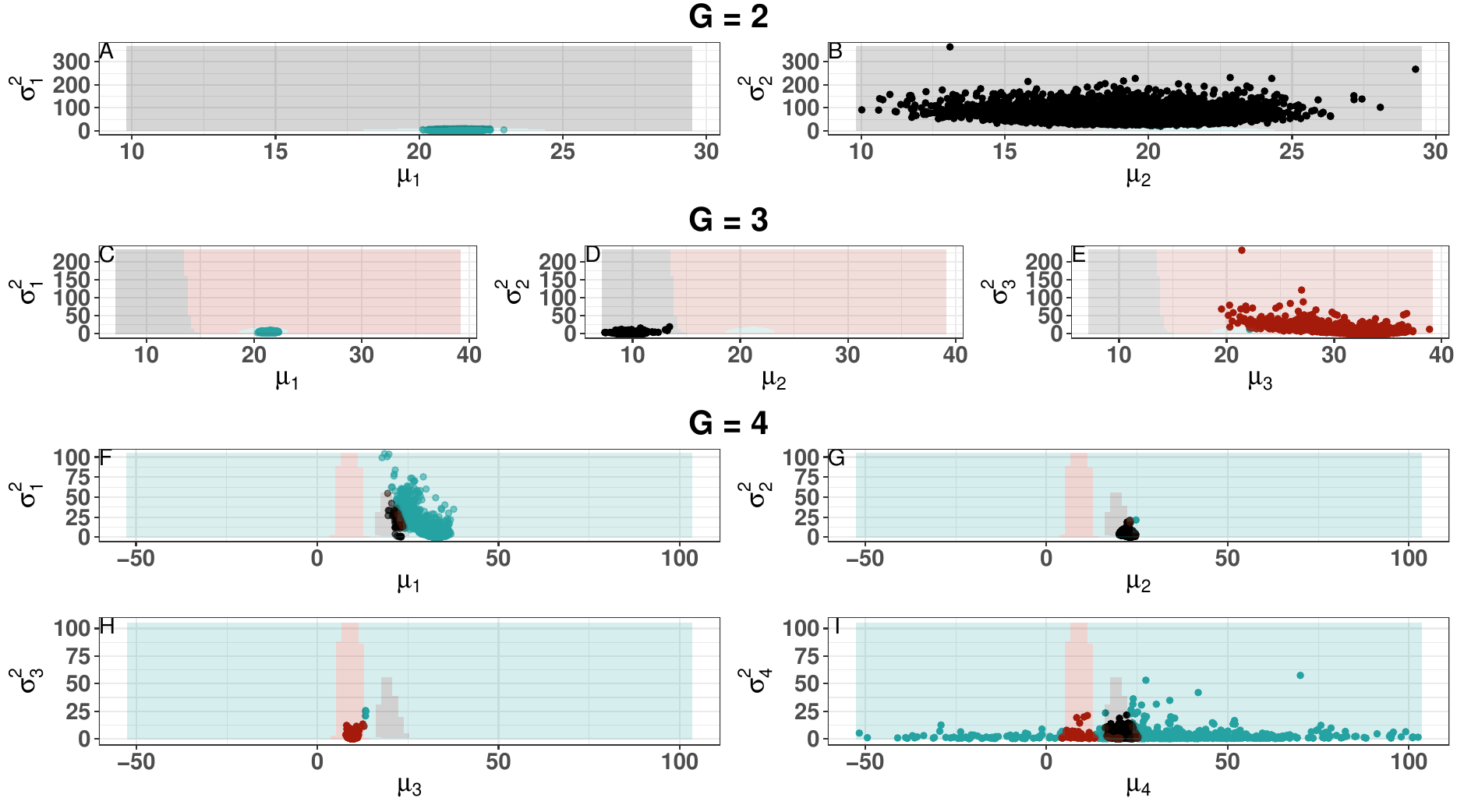}\\
      \includegraphics[width=0.7\textwidth]{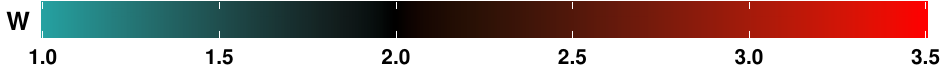}
\end{tabular}

    \caption{The values of $\mu_g,\sigma_g$ and $W$ for a sample $(\mu_g^{(1)},\sigma_g^{(1)}),\dots,(\mu_g^{(T)},\sigma_g^{(T)})$ from the posterior of the galaxy dataset of size $T=10000$ and different values of $G$; the acceptance regions of QDA are shown via a grid in the background}
    \label{fig:vis_galaxy}
\end{figure}

The key identity for this proof is that for every $t$ and for every function $\iota$ of $\theta$\begin{align*}
    \sum_{o=1}^{G!}\iota(P_o(\theta^{(t)\star}))=\sum_{o=1}^{G!}\iota(\psi_o(\theta^{(t)\star})).
\end{align*}It is true because the functions $\psi_o,P_o$ each assign a different permutation to $\theta$. This can be illustrated via Figure \ref{fig:dummy_example}: let $T=10$, $\theta^{(t)\star}=(\mu^{(t)\star}_1,\mu^{(t)\star}_2)$, and $\iota(\theta^{(t)\star})=\mathds{1}_{E_{\hat{\theta},\hat{\Sigma},c}}(\theta^{(t)\star})$. In this case, we are simply counting the number of times that $P_o(\theta^{(t)\star})$ is in $E_{\hat{\theta},\hat{\Sigma},c}$. This can be done by applying $P_1,P_2$ one by one and counting each time (plots A and B), or by ordering every parameter such that $\mu_1^{(t)}>\mu_2^{(t)}$ via $\psi_1$, and then switching the order to $\mu_2^{(t)}>\mu_1^{(t)}$ via $\psi_2$ (plots C and D). In fact, $\psi_2$ can be avoided entirely in Figure \ref{fig:dummy_example} since $E_{\hat{\theta},\hat{\Sigma},c}$ does not cross the identity line.

Since $\hat{Z}^{-1}_{\text{S}}$ is also a sum over all permutations of $\theta^{(t)}$, we can replace $P_o$ by $\psi_o$ to get \begin{align*}
    \hat{Z}^{-1}_{\text{S}}=\frac{1}{G!}\sum_{o=1}^{G!}\frac{1}{T/2}\sum_{\substack{t=T/2+1,\\\psi_o(\theta^{(t)\star})\in B_{\hat{\theta},\hat{\Sigma},c,\alpha}}}^T\frac{1/V(B_{\hat{\theta},\hat{\Sigma},c,\alpha})}{\pi(\theta^{(t)\star})L(\theta^{(t)\star})}.
\end{align*}

Suppose that $\psi_o$ is such that $W(\xi_{g_1}^{\psi_o})> W(\xi_{g_2}^{\psi_o})$, but $\Delta_{g_1,g_2}=1$. This immediately implies that $\psi_o(\theta^{(t)\star})$ will not be included in the sum for any $\theta^{(t)\star}$, since  $W(\xi_{g_1})\leq  W(\xi_{g_2})$ for all $\xi_{g_1},\xi_{g_2}$ satisfying the ellipsoidal constraint on  $E_{\hat{\theta},\hat{\Sigma},c}$, and thus in particular in $B_{\hat{\theta},\hat{\Sigma},c,\alpha}$. Thus, the sum only includes orderings which respect the partial ordering implied by $\Delta$. This is how topological orderings are defined. Since $\Omega$ only excludes those parts of the series that would be excluded anyway, this finishes the proof.
\end{proof}

\begin{figure}
    \centering
    
    \includegraphics[width=\textwidth]{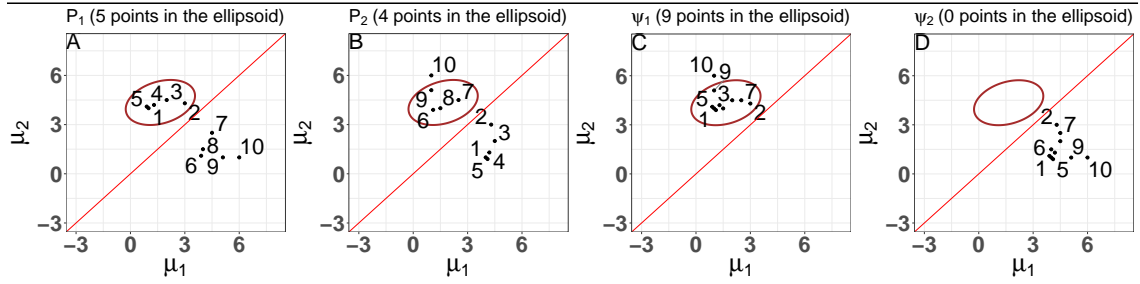}

    \caption{A toy example to illustrate the computation of the THAMES; the points that lie in the ellipsoid can be counted by applying $P_1,P_2$ or $\psi_1,\psi_2$.}
    \label{fig:dummy_example}
\end{figure}

\subsection*{\textAfive\label{ssec: Afive}}

\begin{theorem}\label{thm-nobile_implication}
Suppose that a Dirichlet prior with parameters $e_1,\dots,e_G>0$ is used for the proportions and that $\hat{Z}(G-1)$ is a consistent estimator of $Z(G-1)$. If $\xi_1,\dots,\xi_G$ are a priori independent and their prior distribution does not depend on the total number of components, then\begin{align}\label{eq:multivariate_reduction_estim}
    \hat{Z}(G)=\frac{\Gamma\left(\sum^G_{g=1}e_g\right)\Gamma\left(n+\sum^{G-1}_{g=1}e_g\right)}{\Gamma\left(n+\sum^G_{g=1}e_g\right)\Gamma\left(\sum^{G-1}_{g=1}e_g\right)}\cdot \hat{Z}(G-1)\cdot\frac{1}{\hat{p}_0(G)},
\end{align} with \begin{align*}
    \hat{p}_0(G)=\frac{1}{G}\sum^G_{g=1}\frac{1}{T}\sum_{t=1}^T\prod_{i=1}^n(1-\hat{z}_{i,g}^{(t)}),\quad\hat{z}_{i,g}^{(t)}=\frac{\tau_g^{(t)}f(Y_i;\xi_g^{(t)})}{\sum_{\tilde{g}=1}^G\tau_{\tilde{g}}^{(t)}f(Y_i;\xi_{\tilde{g}}^{(t)})}
\end{align*}is a consistent estimator of $Z(G)$ as long as $\hat{p}_0(G)$ is a consistent estimator of its posterior mean.
\end{theorem}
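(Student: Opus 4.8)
The plan is to reduce the statement to a deterministic Nobile-type identity relating $Z(G)$ and $Z(G-1)$, and then to invoke the continuous mapping theorem. First I would write the marginal likelihood by summing the complete-data evidence over all allocation vectors $s\in\{1,\dots,G\}^n$, namely $Z(G)=\sum_s p(Y,s\mid G)$ with $p(Y,s\mid G)=\int \prod_{i=1}^n \tau_{s_i} f(Y_i\mid\xi_{s_i})\,\pi(\tau)\prod_{g=1}^G\pi_\xi(\xi_g)\,d\tau\,d\xi$, and then isolate the contribution of those allocations in which a fixed component $g$ is empty (i.e.\ $s_i\neq g$ for all $i$), denoting this event $A_g$. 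Three things happen in this restricted sum: the unused parameter $\xi_g$ integrates out to $1$ because its prior is normalised and does not depend on $G$; the remaining component-parameter integral is term-by-term identical to the corresponding integral in the $(G-1)$-component model, again because the $\xi_g$ are a priori independent with a $G$-free prior; and the Dirichlet integral over $\tau$ can be evaluated in closed form.

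The Dirichlet--multinomial computation is where the Gamma prefactor appears. Writing $n_h$ for the number of observations allocated to component $h$, the $\tau$-integral in the $G$-model (with $n_g=0$) equals $\frac{\Gamma(\sum_{h=1}^G e_h)}{\Gamma(n+\sum_{h=1}^G e_h)}\cdot\frac{\prod_{h\neq g}\Gamma(n_h+e_h)}{\prod_{h\neq g}\Gamma(e_h)}$, whereas the $(G-1)$-model integral has the same shape with $\sum_{h}^{G-1}e_h$ in place of $\sum_h^G e_h$. Their ratio is the allocation-independent constant
\[
C(G)=\frac{\Gamma(\sum_{g=1}^G e_g)\,\Gamma(n+\sum_{g=1}^{G-1}e_g)}{\Gamma(n+\sum_{g=1}^G e_g)\,\Gamma(\sum_{g=1}^{G-1}e_g)}.
\]
Because $C(G)$ does not depend on $s$, summing over all allocations with component $g$ empty yields $\sum_{s:\,s_i\neq g\,\forall i}p(Y,s\mid G)=C(G)\,Z(G-1)$, and dividing by $Z(G)$ gives the key identity $Z(G)\,P(A_g\mid Y)=C(G)\,Z(G-1)$.

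Next I would identify the posterior probability $P(A_g\mid Y)$ with the posterior mean of $\prod_{i=1}^n(1-z_{i,g})$, where $z_{i,g}$ is the value taken by $\hat z_{i,g}^{(t)}$ as a function of $\theta$. Conditional on $\theta$ the allocations are independent across $i$ with $P(s_i=g\mid Y,\theta)=z_{i,g}$, so $P(A_g\mid Y,\theta)=\prod_i(1-z_{i,g})$, and the tower property gives $P(A_g\mid Y)=\mathrm{E}[\prod_i(1-z_{i,g})\mid Y]$. Under the exchangeability assumed throughout (identical, $G$-free priors on each $\xi_g$ and a common Dirichlet parameter), the quantities $P(A_g\mid Y)$ coincide for all $g$ and $C_g(G)=C(G)$, so their average $p_0(G):=\frac1G\sum_{g=1}^G P(A_g\mid Y)$ equals the common value and is exactly the posterior mean that $\hat p_0(G)$ estimates; the averaging merely reduces Monte Carlo variance. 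The identity therefore reads $Z(G)=C(G)\,Z(G-1)/p_0(G)$.

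Finally, consistency follows from the continuous mapping theorem: by hypothesis $\hat Z(G-1)$ is consistent for $Z(G-1)$ and $\hat p_0(G)$ is consistent for $p_0(G)$, and provided $p_0(G)>0$ the product-quotient $\hat Z(G)=C(G)\,\hat Z(G-1)/\hat p_0(G)$ converges in probability to $C(G)\,Z(G-1)/p_0(G)=Z(G)$. I expect the main obstacle to be the first step: carefully justifying that the component-parameter integrals are unchanged between the $G$- and $(G-1)$-component models (this is precisely where the prior independence and the $G$-freeness of the $\xi_g$ prior enter, as flagged in the theorem's hypotheses and footnote) and that the Gamma ratio $C(G)$ factors cleanly out of the allocation sum. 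The identification of $P(A_g\mid Y)$ with the posterior mean and the convergence argument are then routine.
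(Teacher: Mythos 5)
Your proof is correct and follows essentially the same route as the paper's: the Nobile identity $Z(G)=C(G)\,Z(G-1)/P(A_g\mid Y)$, the identification of the empty-component posterior probability with the posterior mean of $\prod_{i}(1-z_{i,g})$ via conditioning on $\theta$, the symmetry argument justifying the average over $g$, and the concluding consistency step. The only difference is that you derive the Nobile identity from the allocation-sum and Dirichlet--multinomial integral from first principles, whereas the paper simply cites \citet{No04-MixtureModelsDifferentSizeLink,No07-MargLikEmptyComponents} for it; your derivation is sound and makes the argument self-contained.
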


Note that $\hat{p}_0(G)$ is symmetric. Thus, it can be evaluated on the relabelled posterior sample with no consequences.

\begin{proof}

The data has the following latent variable representation: \begin{align*}
    C_i|\tau&\stackrel{\text{i.i.d}}\sim\text{Multinom}_G(1,\tau),\\Y_i|\xi,C_{i,g}=1&\sim f(\cdot;\xi_g),
\end{align*}with the hidden matrix of allocation vectors $C=(C_1,\dots,C_n)\in\mathcal{M}_{n\times G}(\{0,1\}).$

 \citet{No04-MixtureModelsDifferentSizeLink,No07-MargLikEmptyComponents} showed that \begin{align}\label{eq: phat0_expectation}
     \frac{Z(G)}{Z(G-1)}=\frac{\Gamma\left(\sum^G_{g=1}e_g\right)\Gamma\left(n+\sum^{G-1}_{g=1}e_g\right)}{\Gamma\left(n+\sum^G_{g=1}e_g\right)\Gamma\left(\sum^{G-1}_{g=1}e_g\right)}\cdot\frac{1}{1-p(\exists i:C_{i,G}=1|Y)}.
 \end{align}
 
 As pointed out by  \citet{No07-MargLikEmptyComponents},\begin{align*}
    1-p(\exists i:C_{i,G}=1)=p\left(\left.\sum_{i=1}^nC_{i,G}=0\right|Y\right),
\end{align*} the posterior probability that component $G$ is empty. By the law of total probability\begin{align*}
    p\left(\left.\sum_{i=1}^nC_{i,G}=0\right|Y\right)=\int p\left(\left.\sum_{i=1}^nC_{i,G}=0\right|Y,\theta\right)p(\theta|Y)\;d\theta
\end{align*} The distribution of $C_{i}$ given $Y,\theta$ is i.i.d multinomial with probabilities
\begin{align*}
    z_{i,g}=\frac{\tau_g f(Y_i,\xi_g)}{\sum_{\tilde{g}=1}^G\tau_{\tilde{g}} f(Y_i;\xi_{\tilde{g}})},\quad g=1,\dots,G.
\end{align*} This simplifies the integral to \begin{align*}
    \int p\left(\left.\sum_{i=1}^nC_{i,G}=0\right|Y,\theta\right)p(\theta|Y)\;d\theta=\int \prod_{i=1}^n(1-z_{i,G})p(\theta|Y)\;d\theta,
\end{align*} which is in turn equal to the posterior expectation of $\prod^n_{i=1}(1-z_{i,G})$. By the law of large numbers, \begin{align}\label{eq: phat0_nonsymmetric}
    \frac{1}{T}\sum_{t=1}^T\prod^n_{i=1}(1-\hat{z}_{i,G}^{(t)})
\end{align} is a valid estimator of this expectation. Due to the symmetry of the posterior, choosing any other component $g=1,\dots,G$ instead of the last component $G$ would give the exact same result. Averaging over all of these versions produces a symmetric estimator. This estimator is exactly equal to $\hat{p}_0$. Combining the unbiasedness of $\hat{p}_0$ with the consistency of $\hat{Z}(G-1)$ gives the assertion.
 
\end{proof}

\section*{\suppBtext \label{sec: suppB}}

\subsection*{\textBone \label{ssec: Bone}}

We choose $c=\sqrt{R+1}$, following the results of \citet{Me_et_al24-thames}. The hyperparameter $\alpha$ could be chosen via grid-search over $(0,1)$ such that it minimises the empirical variance of the THAMES. However, while this approach may work, it is computationally expensive, since we would have to approximate the volume of $B_{\hat{\theta},\hat{\Sigma},c,\alpha}$ via a new Monte Carlo sample for each value that $\alpha$ can take on the grid. Instead, we suggest choosing $\alpha$ such that the posterior distribution is well-behaved on $B_{\hat{\theta},\hat{\Sigma},c,\alpha}$. We take this to mean that the distribution of the sample from the posterior within $B_{\hat{\theta},\hat{\Sigma},c,\alpha}$ is as close as possible to a normal distribution, truncated such that at most 50 percent of the posterior sample is within $B_{\hat{\theta},\hat{\Sigma},c,\alpha}$. This is because the tails of the posterior can cause difficulties in reciprocal importance sampling estimators and the tails of the normal distribution are particularly well behaved in the case of the THAMES if they are cut off at 50 percent, as was shown in \citet{Me_et_al24-thames}.

If the posterior is normal, the negative unnormalised log-posterior values \begin{align*}\eta^{(1)},\dots,\eta^{(T)}=
    -\log(\pi(\theta^{(1)})L(\theta^{(1)})),\dots,-\log(\pi(\theta^{(T)})L(\theta^{(T)}))
\end{align*} follow, if they are truncated to an HPD region, a truncated, shifted and scaled $\chi^2$-distribution with $d$ degrees of freedom. This is because the quadratic form \begin{align*}
    (\theta^{(t)}-m)^\intercal S^{-1}(\theta^{(t)}-m)
\end{align*} follows a $\chi^2$-distribution if $m$ and $S$ are the true posterior mean and covariance matrix, respectively.

We set $\alpha$ such that it minimises the Kolmogorov distance against this $\chi^2$-distribution. Let $F$ denote the empirical cumulative distribution function (ECDF) of $\eta^{(1)},\dots,\eta^{(T)}$ conditioned on the $\alpha$-HPD region $H_{\alpha}$,
\begin{align*}
    F(\eta)=\frac{1}{\sum^T_{t=1}\mathds{1}_{H_{\alpha}}(\theta^{(t)})}\sum^T_{t=1}\mathds{1}_{H_{\alpha}}(\theta^{(t)})\cdot\mathds{1}_{(-\infty,\eta]}(\eta^{(t)}),
\end{align*} 
with $\mathds{1}$ denoting the indicator function, and let $\chi_{M,s}^2$ denote the cumulative distribution function (CDF) of the shifted and scaled $\chi^2$ distribution truncated on $[0,\max \{\eta^{(1)},\dots,\eta^{(T)}\}]$ with $d$ degrees of freedom, mean $M$ and variance $v$. 
The Kolmogorov distance \citep[see for instance][]{WiRa11-kolmogorov_distance} on $H_{\alpha}$ is given by \begin{align}\label{eq: kolm_test_statistic}
    \sup_{\theta\in H_{\alpha}} \left|\chi^2_{\hat{M}_\alpha,\hat{v}_\alpha}(-\log(\pi(\theta)L(\theta)))-F(-\log(\pi(\theta)L(\theta)))\right|,
\end{align} where the mean and variance parameter are computed using the moment-estimators $\hat{M}_\alpha,\hat{v}_\alpha$ on the values of $\eta^{(1)},\dots,\eta^{(T)}$ for which the corresponding parameters are within $H_{\alpha}$. We minimise this distance by evaluating Equation \eqref{eq: kolm_test_statistic} on some subset of the grid $\{\theta^{(1)},\dots,\theta^{(T)}\}$ for all sensible values of $\alpha$, with $\alpha$ being capped at 50 percent for the reasons that we mentioned before (for more details on this technique, see the following subsection).

\paragraph{\textbf{Summary}}Choosing $\alpha$ such that it minimises Equation \eqref{eq: kolm_test_statistic} has many advantages. An optimal value is easy to compute, since it can be efficiently approximated by evaluating at most $T^2$ different values due to the discreteness of $F$. It also conforms with previous results from \citet{Me_et_al24-thames}: if the posterior is exactly normal, the Kolmogorov distance is going to approach $0$ if $\alpha=1$, which implies $B_{\hat{\theta},\hat{\Sigma},c,\alpha}=E_{\hat{\theta},\hat{\Sigma},c}$. This means that the asymptotically optimal set derived by \citet{Me_et_al24-thames} for the case that the posterior is exactly normal is also going to be chosen by our method if this case occurs. Finally, if the posterior is not normal, our method chooses $\alpha$ such that the THAMES is well-behaved, in the sense that the Kolmogorov distance is going to penalise extreme outliers.

Even without this optimisation algorithm, we have observed that choosing $c=\sqrt{R+1}$ and $\alpha=0.5$ lead to good results, which is in congruence with similar observations from \citet{Me_et_al24-thames} and \citet{Re20-newthames_variation}.

\subsection*{\textBtwo\label{ssec: Btwo}}

We want to choose $\alpha$ such that it minimises the Kolmogorov distance \begin{align*}\sup_{\theta\in H_{\alpha}} \left|\chi^2_{\hat{M}_\alpha,\hat{s}_\alpha}(-\log(\pi(\theta)L(\theta)))-F(-\log(\pi(\theta)L(\theta)))\right|.
\end{align*} We do this by first choosing a grid for the different values of $\alpha$, and then, for each value on the grid, approximating the supremum via another grid search over $\theta$, respectively. 

Regarding the choice of the grid for $\alpha$, it is important to note that the above function is itself an estimator of the distance between the true CDF of the normalised log posterior values, approximated by $F$, and the CDF of the $\chi^2$-distribution. For this reason, we have experienced that small values of $\alpha$ caused the function to be very volatile and unreliable, since almost all of the normalised log posterior values are excluded in this case. We have experienced that this can be avoided by only evaluating values of $\alpha$ that are larger than $0.2$.

In addition, the function $F$ is constant for most values of $\theta$, excluding the values of the posterior sample itself, $\{\theta^{(1)},\dots,\theta^{(T)}\}$. In turn, this leaves us with a possibility of $T$ different values of $\alpha$ to check, but we chose to thin the sequence by a step-size of 100, since we made the experience that this greatly reduces computation time with no loss in the efficiency of the THAMES. Regarding the choice of the grid for $\theta$, we chose the full grid, since the computation time was quite small, even for a large grid. If the chosen $\alpha$ is larger than 0.5, it is then truncated to 0.5, to be in accordance with \citet{Me_et_al24-thames}.

\subsection*{\textBthree \label{ssec: Bthree}}

$\xi_{g_1}=\xi_{g_2}$ lies in $E_{\hat{\theta},\hat{\Sigma},c}$ if, and only if, the minimum\begin{align}\label{eq:quadopt}
    \text{min}_{\theta:\xi_{g_1}=\xi_{g_2}}(\theta-\hat{\theta})^\intercal\hat{\Sigma}^{-1}(\theta-\hat{\theta})
\end{align} is smaller or equal to $c^2$. This is a quadratic optimisation problem, which can be solved in polynomial time since $\hat{\Sigma}^{-1}$ is positive definite, for example by using one of the algorithms given in \citet{GoId83-quadratic_optimization02,GoId06-quadratic_optimization01}. We do this using the quadprog R package \citep{BeWe19-quadprog}.

\subsection*{\textBfour\label{ssec: Bfour}}

The matrix $\Delta$ is constructed as follows: if it has been established that $\xi_{g_1}=\xi_{g_2}$ for $g_1,g_2$ and for some pair of values $(\xi_{g_1},\xi_{g_2})$ that follows the ellipsoidal constraint given by $E_{\hat{\theta},\hat{\Sigma},c}$, then $\Delta_{g_1,g_2}=0$ since $W(\xi_{g_1})=W(\xi_{g_2})$ in that case. If not, then we test $W(\xi_{g_1}^{(j)})<W(\xi_{g_2}^{(j)})$ with $(\xi_{g_1}^{(j)},\xi_{g_2}^{(j)})$ being from the Monte Carlo sample that is simulated uniformly on $E_{\hat{\theta},\hat{\Sigma},c}$. If there exists a $j$ such that the condition is false, then $\Delta_{g_1,g_2}$ is set to 0. It is set to 1 otherwise. The law of large numbers guarantees $\Delta_{g_1,g_2}$ will be evaluated accurately if the size of the Monte Carlo sample $N$ is large enough.

The set $\Omega$ is constructed via an algorithm that returns all topological orderings on the graph defined by $\Delta$. An algorithm that does this in polynomial time with respect to the size of $\Omega$ is given in \citet{KnSz74-alltoporderings}. We implemented a recursive version of this algorithm.

A very rough upper bound on the size of $\Omega$ is given by $G!/|I'(G)|!$, where $|I'(G)|$ denotes the number of vertices of the longest path in the graph generated by $\Delta$. This is because there are $|I'(G)|+1$ possibilities of choosing the position of the first node that is not in $I'(G)$, $|I'(G)|+2$ possibilities of choosing the position of the second node and so on, and the order of the nodes within $I'(G)$ is fixed. Whenever this bound passes 50,000, we take this as a cue that the THAMES cannot be evaluated in a reasonable amount of time. In this case, we iteratively shrink the ellipse until the bound passes 50,000. This is done by repeatedly dividing $c$ by 2 and, if the ellipsoid ends up containing no samples due to this shrinkage, setting $\hat{\theta}$ to the mode of the second half of the posterior sample and continuing to shrink $c$. This makes sure that the ellipsoid always contains at least one point, the mode of the second half of the posterior sample. To make sure that $W$ can differentiate this point after $\hat{\theta}$ has been shifted, we train QDA only on those sample points that are within the ellipsoid in this case.

\section*{\suppCtext \label{sec: suppC}}

Note that the conditions of Theorem \ref{thm-nobile_implication} are not met for the univariate model in Supplement {\color{blue}\hyperref[ssec: Cthree]{C3}}, so the theorem was not used for univariate data. In addition, we have noticed that a large amount of the ellipsoid needed to be truncated in the case that its centre was close to the edge of the parameter space, which happened with variance and covariance matrix parameters. To avoid this, these parameters where mapped to the real line and the prior was adjusted via the absolute value of the determinant of the jacobian before evaluating the THAMES. The transformations used are described in \citet{Ca_et_al17-stan}.

\subsection*{\textCone \label{ssec: Cone}}

An analytic solution of the marginal likelihood was computed in the case that the number of parameters and data points is very small  \citep[the same idea was used in][]{Fr06-FiniteMixtureAndMarkovSwitchingModels}, such that the performance of the THAMES could be compared to other state-of-the-art estimators.

\paragraph{\textbf{Model}}

We consider a univariate Gaussian mixture model,\begin{align}\label{eq: univ_gauss_mix}
    Y\in\mathbb{R}^{n},\quad  Y_{1},\dots,Y_{n}|\mu,\sigma,\tau\stackrel{\text{i.i.d}}\sim \sum_{g=1}^G\tau_g\mathcal{N}(\mu_g,\sigma^2_g),
\end{align} where only the parameters $\mu=(\mu_1,\dots,\mu_G)$ are unknown and \\$\tau=(\tau_1,\dots,\tau_G),\sigma=(\sigma_1,\dots,\sigma_G)$ are known. They are known in the sense that we put a Dirac density on these parameters, where a Dirac density is a discrete density which is only equal to 1 at exactly 1 point. Let $C=(C_1,\dots,C_n)\in\mathcal{M}_{n\times G}(\{0,1\})$ denote the matrix of hidden allocation vectors. An analytic solution was computed and a Gibbs sampler was constructed using the following theorem. It is proven in Supplement \hyperref[sec: suppD]{D}.

\begin{theorem}\label{thm-compute_marglik}
If the prior is set to the distribution with the following symmetric densities, \begin{align}\label{eq: simple_prior}
    \pi(\mu)=\prod_{g=1}^G\mathcal{N}(\mu_g;0,1),\quad\pi(\sigma_g)=\begin{cases}1&\sigma_g=1,\\0&\textup{else},\end{cases}\quad\pi(\tau_g)=\begin{cases}1&\tau_g=1/G,\\0&\textup{else,}\end{cases}
\end{align} then the marginal likelihood is given by \begin{align}\label{eq: exact_marglik}
    p(Y)&=\sum_{C_1,\dots,C_n}p(Y|C_1,\dots,C_n)p(C_1,\dots,C_n)\\&=\frac{1}{G^n}\sum_{C_1,\dots,C_n}\textup{MVN}_n\left(Y;0_n,I_n+O\right),\nonumber
\end{align} where $O$ is a matrix such that $O_{i,j}=1$ if $C_i=C_j$, $O_{i,j}=0$ else. The conditional distributions of $\mu$ and the latent components $C_1,\dots,C_n$ are \begin{align*}
    \mu_g|C,Y&\stackrel{\textup{i.i.d}}\sim \mathcal{N}\left(\frac{\sum_{i=1}^n C_{i,g}Y_i}{1+\sum_{i=1}^n C_{i,g}},\frac{1}{1+\sum_{i=1}^n C_{i,g}}\right)\\C_i|\mu,Y&\stackrel{\textup{i.i.d}}\sim    \textup{Multinom}_G\left(1,\begin{pmatrix}\frac{\exp\left(-\frac{1}{2}\left(Y_i-\mu_1\right)^2\right)}{\sum_{g=1}^G\exp\left(-\frac{1}{2}\left(Y_i-\mu_g\right)^2\right)},\\\vdots,\\\frac{\exp\left(-\frac{1}{2}\left(Y_i-\mu_G\right)^2\right)}{\sum_{g=1}^G\exp\left(-\frac{1}{2}\left(Y_i-\mu_g\right)^2\right)}\end{pmatrix}\right).
\end{align*}
 \end{theorem}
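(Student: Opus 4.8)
The plan is to condition on the latent allocation matrix $C$, integrate out the means $\mu$ in closed form via Gaussian conjugacy, and then sum over the finitely many configurations of $C$. Under the Dirac priors the scales $\sigma_g=1$ and proportions $\tau_g=1/G$ are fixed, so each row $C_i$ is uniform over the $G$ components and $p(C)=\prod_{i=1}^n p(C_i)=G^{-n}$, irrespective of $C$. Hence $p(Y)=\sum_C p(C)\,p(Y\mid C)=G^{-n}\sum_C p(Y\mid C)$, and the whole task reduces to identifying $p(Y\mid C)=\int p(Y\mid\mu,C)\,\pi(\mu)\,d\mu$.

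First I would write the conditional model in linear-Gaussian form. Treating $C$ as the $n\times G$ indicator matrix with rows $C_i$ and $\mu=(\mu_1,\dots,\mu_G)^\intercal$, the likelihood becomes $Y=C\mu+\epsilon$ with $\epsilon\sim\mathcal{N}(0_n,I_n)$, independent of the prior $\mu\sim\mathcal{N}(0_G,I_G)$. Since $Y$ is then a linear combination of two independent Gaussian vectors, $Y\mid C$ is Gaussian with mean $0_n$ and covariance $\text{Var}(C\mu)+\text{Var}(\epsilon)=C\,I_G\,C^\intercal+I_n=CC^\intercal+I_n$. The crux is the identity $(CC^\intercal)_{i,j}=\sum_{g=1}^G C_{i,g}C_{j,g}=\mathds{1}[C_i=C_j]=O_{i,j}$, valid because two one-hot rows overlap precisely when they select the same component. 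Thus $Y\mid C\sim\text{MVN}_n(Y;0_n,I_n+O)$, and inserting $p(C)=G^{-n}$ and summing gives Equation \eqref{eq: exact_marglik}.

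For the full conditionals I would appeal to standard Bayesian updates. Given $C$ the components decouple, since each $\mu_g$ sees only the data $\{Y_i:C_{i,g}=1\}$ through unit-precision Gaussian likelihoods; combining these with the $\mathcal{N}(0,1)$ prior yields posterior precision $1+\sum_{i=1}^n C_{i,g}$ and precision-weighted mean $(\sum_{i=1}^n C_{i,g}Y_i)/(1+\sum_{i=1}^n C_{i,g})$, which is the stated Gaussian. For $C_i\mid\mu,Y$, Bayes' rule gives $P(C_{i,g}=1\mid\mu,Y)\propto\tau_g f(Y_i;\mu_g)=\tfrac{1}{G}\mathcal{N}(Y_i;\mu_g,1)$; the factor $1/G$ and the Gaussian normalising constants cancel in the normalisation, leaving exactly the multinomial probabilities displayed in the statement.

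The calculations are elementary, and I expect no genuine obstacle. The only step deserving care is the covariance identity $CC^\intercal=O$, from which the co-membership structure of the marginal likelihood originates; the remainder is bookkeeping, namely recognising the linear-Gaussian form, carrying out the conjugate update one component at a time, and verifying the cancellations in the allocation step.
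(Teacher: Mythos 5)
Your proposal is correct and follows essentially the same route as the paper: condition on the allocation matrix, note $p(C)=G^{-n}$, marginalise $\mu$ out of the linear--Gaussian representation $Y=C\mu+\varepsilon$ to get $\textup{MVN}_n(Y;0_n,I_n+O)$, and obtain the two full conditionals by conjugacy (the paper does the latter by completing the square in the joint density, which is the same computation). Your explicit identity $CC^\intercal=O$ is a slightly cleaner way of stating what the paper asserts directly about the marginal covariance.
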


\paragraph{\textbf{Settings}}

In all of the following scenarios, we set $n=10$ such that the formula given in Theorem \ref{thm-compute_marglik} could be computed in a reasonable amount of time. In addition, even though our prior assumptions were symmetric, we wanted to test how our estimator performs in the case that the data is simulated from a distribution that is not symmetric, in the sense that its components have distinct means and proportions. For all scenarios, we simulated 50 different datasets from the fixed parameters $(\mu^\star,\sigma^\star,\tau^\star)$. We set $\sigma^\star=(1,\dots,1)$, $\tau^\star=(1/3,2/3)$ if $G=2$, $\tau^\star=(2/6,1/6,3/6)$ if $G=3$. The mean was varied, \begin{align*}
    \mu^\star=\begin{cases}(3,3)^\intercal+(2(1-\rho)+1)(-1,1)^\intercal&G=2,\\(3,3,3)^\intercal+(2(1-\rho)+1)(-1,0,1)^\intercal&G=3,\end{cases}\quad\rho=0,1/2,1.
\end{align*}This assures linear interpolation between means of $(0,3,6)^\intercal,(0,6)^\intercal$, whose values are particularly well separated, and means of $(2,3,4)^\intercal,(2,4)^\intercal$, which are less separate. Note that the true proportion parameters are not included in the class of models that was fitted. This is not a problem since we are not assessing the performance of estimators of the parameters, but of estimators of the marginal likelihood.

\paragraph{\textbf{Sampling procedure}}

A sample from the posterior of size $12,000$ was produced via Gibbs sampling initialised in the midpoint of the data. The first $2,000$ samples were discarded (burn-in) and the remaining $T=10,000$ samples were used to compute the bridge sampling estimator, the estimator from \citet{Re20-newthames_variation} (using the same tuning parameters as the THAMES) as well as the THAMES on the relabelled posterior sample, where all estimators were defined with the prior from Equation \eqref{eq: simple_prior}. 
 
We tested three scenarios: in the first, $G=2$ and the data is simulated with $G=2$. In the second, we tested our performance in the case of overfitting by computing the marginal likelihood for $G=3$ on the data that was simulated  with $G=2$. Analogously, the third scenario corresponds to underfitting in the case that $G=3$. 

\subsection*{\textCtwo \label{ssec: Ctwo}}

In this section, the performance of the THAMES was assessed in the case that the number of components $G$ is large. The analytic expression given by Equation \eqref{eq: exact_marglik} is untractable in this case. However, it can be computed if the mixture components are sufficiently distinct from each other \citep[the same idea was used in][]{Fr06-FiniteMixtureAndMarkovSwitchingModels}. This is because in this case, the likelihood conditional on the allocations is equal to $p(Y|C)\simeq 0$ for all but one allocation $C^0$ (and its label-switched versions, which result in the same likelihood). Thus, the marginal likelihood reduces to \begin{align}\label{eq: exact_seperate_marglik}
    p(Y)\simeq G!\cdot p(Y|C^0)\cdot p(C^0),
\end{align} where the factor $G!$ comes from the different label-switched versions of $C^0$.

$C^0$ was estimated via the maximum aposteriori (MAP) estimator ($\hat{C}^{0}_{i,g}=1$ if $g=\text{argmax}_{\tilde{g}}\hat{z}_{i,\tilde{g}}$ and $\hat{C}^{0}_{i,g}=0$ else) and a prior distribution was chosen for which Expression \eqref{eq: exact_seperate_marglik} has an analytic solution and for which Theorem \ref{thm-nobile_implication} holds. This is the case if the prior on $\tau$ is a Dirichlet distribution and the priors on $\xi_g$ are conjugate and independent. An exact expression of $p(Y|C^0)p(C^0)$ is given in Supplement \hyperref[sec: suppD]{D}.

\paragraph{\textbf{Setting}}

To assure that Equation \eqref{eq: univ_gauss_mix} can be applied, we needed to sufficiently separate the different components. We did this by choosing an absolute difference of 100 for all mean-component entries and a variance vector of ones for each respective component. We then assessed the validity of Equation \eqref{eq: univ_gauss_mix} empirically: if all allocation vectors generated by the Gibbs sampler were identical, we assumed that there was indeed only one allocation vector with posterior probability not numerically equal to 0.

Let $\text{MVN}_d(\mu_g,\Sigma_g)$ denote the Gaussian law of dimension $d$ with mean $\mu_g$ and covariance matrix $\Sigma_g$. The mixture model is a multivariate Gaussian, \begin{align}\label{eq: gauss_mix_multi}
    Y\in\mathbb{R}^{n\times d},\quad  Y_{1},\dots,Y_{n}|\mu,\Sigma,\tau\stackrel{\text{i.i.d}}\sim \sum^G_{g=1}\tau_g\text{MVN}_d(\mu_g,\Sigma_g).
\end{align} We sampled 1 data set from Equation \eqref{eq: gauss_mix_multi} with identical proportions $\tau_g=\frac{1}{G}$ and covariance matrices $\Sigma_g=I_d$, but different means $\mu_g=100\cdot g\cdot 1_d$, where $1_d$ denotes a vector of ones. We chose the priors\begin{align*}
    \tau&\sim\text{Dirichlet}(G;e_0,\dots,e_0),\quad \tau\in\mathbb{R}^G,\\
    \mu_g|\Sigma_g&\sim\text{MVN}_d(\beta,\Sigma_g/\kappa_0),\quad \mu_1,\dots,\mu_G\in\mathbb{R}^m,\quad g=1,\dots,G,\\
    {\Sigma_g}&{\sim \text{InvWis}(\phi_0,\Lambda)},\quad \Sigma_1,\dots,\Sigma_G\in\mathbb{R}^{d\times d},\quad g=1,\dots,G,\\ e_0&=1,\\\beta_{r,g}&=\frac{1}{n}\sum^n_{i=1}Y_{r,i},\\\kappa_0&=0.00001,\\\phi_0&=d,\\\Lambda&=(1+\phi_0+d)\frac{1}{G}\sum^G_{g=1}\tilde{\Sigma}_g,
\end{align*} where $\tilde{\Sigma}_1,\dots,\tilde{\Sigma}_G$ are estimates for the component-wise covariance matrices, computed via mclust.

Out of all of these values, only the value of $\Lambda$ is non-standard. We suggest this choice of $\Lambda$ since it assures that the prior mode of the covariance matrices is positioned within a range of plausible values of the covariance matrices. This needed to be done due to the large range of the data, which precludes more standard choices such as the sample covariance matrix, which was several orders of magnitude away from the true value in our case. We also tried to remedy this by scaling the sample covariance by $G^{-d/2}$, as suggested in \citet{FrRa07-bayes_regularization_mixmodels}, but this does in turn result in a value of $\Lambda$ that is very close to the null-matrix, since e.g., for one of the following settings we set $G^{-d/2}=15^{-5/2}$, which is fairly small. 
The hyperparameter that we suggest resulted in a posterior distribution that is fairly concentrated around the true value of each covariance matrix. Gibbs sampling was used, with the sampler being initialized in the allocation vector, using mclust. We tried 2 different settings, each chosen such that it exhibits similarity to a real dataset:

\textbf{Setting 1 (moderate $d$ and $G$)} We simulated from a $d=6$-dimensional model with $G=5$ components and $n=200$ data points. Thus, the dimension of the parameter space is $R=d-1+Gd+Gd(d+1)/2=139$. 

\textbf{Setting 2 (large $d$ and $G$)} We simulated from a $d=5$-dimensional model with $G=15$ components and $n=345$ data points. It is of course possible to use the THAMES to determine the marginal likelihood of the full model, such as in the first setting. However, the number of free parameters is very large, and, as is typical in such high dimensional models, it can be decreased by imposing additional constraints on the covariance matrices of the model. Gibbs sampling is also possible in this case, and a variety of alternative parsimonious geometric models with their associated Gibbs sampling algorithm are presented in \citet{Be_et_al97-BayesianGaussianMixtures}. We choose the following: we impose each covariance matrix to be diagonal with a variance vector $\sigma_{g,1}^2,\dots,\sigma_{g,d}^2$ for each component. We choose independent inverse gamma priors $\text{InvGamma}(\phi_{0},\lambda_r)$ for each of the variances, with $\phi_{0}=2,\lambda_r=2\frac{1}{G}\sum^G_{g=1}\hat{\sigma}_{g,r}^2$ analogous to the multivariate case.

\subsection*{\textCthree \label{ssec: Cthree}}

\begin{table}
    \centering
        \begin{tabular}{crrrrrrr}
         & G=2&\textbf{G=3}&G=4&G=5 &\textbf{G=6}&G=7&G=8\\\hline log-THAMES &-235.2&\textbf{-226.7}& -226.0& -225.6& \textbf{-225.4}& -226.9& -226.4\\ CO&2   &\textbf{3}   &2   &1   &\textbf{0}  &-1  &-2\\& G=9&G=10&G=11&G=12&G=13 &G=14&G=15\\\hline log-THAMES &-275.0& -307.6& -336.1& -329.7& -363.5& -375.2& -387.6\\ CO&-5  &-4  &-7  &-6  &-9 &-10  &-9
    \end{tabular}
    \caption{The log-THAMES and criterion of overlap (CO) for different values of $G$ for the galaxy dataset; the CO is maximized by $G=3$, the marginal likelihood by $G=6$. Numbers were rounded to the first decimal place.}
    \label{tab:marglikCO_galaxies}
\end{table}

The performance of the THAMES was evaluated for the datasets used in \citet{RiGr97-datasets}, namely the $``$acidity$"$, $``$enzyme$"$, and $``$galaxy$"$ dataset. This is because results of other, similar marginal likelihood estimators on the same datasets are available in \citet[Section 2.3.2]{Ce_et_al19-model_selection_mixture_models}. We compared the THAMES to these estimators. 

\paragraph{\textbf{Settings}} All datasets were fitted to a univariate Gaussian mixture model,\begin{align*}
    Y\in\mathbb{R}^{n},\quad  Y_{1},\dots,Y_{n}|\mu,\sigma,\tau\stackrel{\text{i.i.d}}\sim \sum_{g=1}^G\tau_g\mathcal{N}(\mu_g,\sigma^2_g),
\end{align*} with hierarchical priors\begin{align*}
    \tau&\sim\text{Dirichlet}(G;1,\dots,1),\quad \tau\in\mathbb{R}^G\\
    \mu_g&\sim\mathcal{N}(x_0,\lambda),\quad g=1,\dots,G,\\
    {\sigma_g^2}|\zeta&{\sim \text{InvGamma}(2,\zeta)},\quad g=1,\dots,G,\\
    \zeta&\sim\text{Gamma}(0.2,10/\lambda^2),
\end{align*}

where $x_0$ and $\lambda$ are the midpoint and length of the observation interval, respectively.

\paragraph{\textbf{Sampling procedure}}

We used the same sampling procedure as \citet[Section 2.3.2]{Ce_et_al19-model_selection_mixture_models}: Gibbs sampling was conducted to obtain a posterior sample of size $12,000$ with a burn-in of $2,000$, yielding a sample of size $T=10,000$. The values of the galaxy dataset were divided by 100 to be in accordance with \citet[Section 2.3.2]{Ce_et_al19-model_selection_mixture_models}. The bayesmix package \citep{Gr23-bayesmix_package} was used to implement the Gibbs sampler.

\paragraph{\textbf{Additional results}}

The THAMES was used on a posterior sample of size $T=100,000$ with a burn-in of 2,000 to compute the marginal likelihood and criterion of overlap (CO) for different values of $G$. The results are shown in Table \ref{tab:marglikCO_galaxies}. The marginal likelihood is maximised by $G=6$, while the CO is maximised by $G=3$. As pointed out in the main article, this difference is likely due to different meanings of the CO and marginal likelihood: the
number of true components is estimated to be 6, while the number of distinguishable components is estimated to be 3.

\subsection*{\textCfour \label{ssec: Cfour}}

\paragraph{\textbf{Settings}}

The likelihood and priors correspond to those described in Supplement {\color{blue}\hyperref[ssec: Ctwo]{C2}}. 

\paragraph{\textbf{Sampling procedure}}

Gibbs sampling was used, with the sampler being initialised in the allocation vector, using mclust. We used 12,000 iterations with a burn-in of 2,000, resulting in a posterior sample of size $T=10,000$. 

\paragraph{\textbf{Additional results}}

The THAMES was used to estimate the log marginal likelihood for $G=2$ to $G=5$ for the Swiss banknote dataset (Table \ref{tab:marglikCO_banknotes}), $G=2$ to $G=15$ for the BUPA liver disorders dataset (Table \ref{tab:marglikCO_liver}). $G=3$ was found to maximise the CO and marginal likelihood for the Swiss banknote dataset. 

The labels $``$genuine$"$ and $``$counterfeit$"$ are known for the banknote dataset. Table \ref{tab:confmat_banknotes} shows the confusion matrix with respect to these labels, where we clustered the data points for $G=3$, since this value was estimated to maximise the marginal likelihood and CO. All but one of the datapoints are assigned the $``$counterfeit$"$ label if belonging to component 1 or 3, the $``$genuine$"$ label if belonging to component 2.

The component means estimated for the BUPA liver dataset seem to heavily differ around the variable $``$gammagt$"$, denoting the results of a blood test on gamma-glutamyl transpeptidase. This is indicated by the estimates of the posterior mean and standard deviation, computed on the relabelled posterior sample (Table \ref{tab:mupost_liver}).

\begin{table}
    \centering
        \begin{tabular}{crrrr}
         & G=2&\textbf{G=3}&G=4&G=5 \\\hline log-THAMES& -926.78&\textbf{-909.49}&-922.93&-941.87\\ CO&2&\textbf{3}&2&0
    \end{tabular}
    
    \caption{The log-THAMES and criterion of overlap (CO) for different values of $G$ for the Swiss banknote dataset; both model choice criteria were maximised by $G=3$. Numbers were rounded to the second decimal place.}
    \label{tab:marglikCO_banknotes}
\end{table}

\begin{table}
    \centering
        \begin{tabular}{crrrrrrr}
         & G=2&G=3&\textbf{G=4}&G=5 &G=6&G=7&G=8\\\hline log-THAMES &-6689& -6655& \textbf{-6648}& -6668& -6695& -6726& -6764\\ CO&2&3&\textbf{4}&3&0&-1&-2\\& G=9&G=10&G=11&G=12&G=13 &G=14&G=15\\\hline log-THAMES &-6807& -6856& -6910& -6971& -7037& -7109& -7187\\ CO&-3 &-4  &-5  &-6  &-7  &-8  &-9
    \end{tabular}
    
    \caption{The log-THAMES and criterion of overlap (CO) for different values of $G$ for the BUPA liver disorders dataset; the CO and log-THAMES were both maximised by $G=4$. Numbers were rounded to full integers.}
    \label{tab:marglikCO_liver}
\end{table}

\begin{table}
    \centering
        \begin{tabular}{cc}
        g & banknote type\\\hline
        \begin{tabular}{r}
        \\1\\2\\3
    \end{tabular}&\begin{tabular}{rr}
        genuine & counterfeit\\\hline
        1 & 16 \\
        99& 0\\0&84
    \end{tabular}
    \end{tabular}
    
    \caption{Confusion matrix for the Swiss banknote dataset for $G=3$, the value which maximises the marginal likelihood; g denotes the label of component g; components 1 and 3 seem to correspond to counterfeits, while component 2 seems to correspond to genuine banknotes.}
    \label{tab:confmat_banknotes}
\end{table}

\begin{table}
    \centering
        \begin{tabular}{cc}
        g & blood tests\\\hline
        \begin{tabular}{r}
        \\1\\2\\3\\4
    \end{tabular}&\begin{tabular}{rrrrr}
        mcv&  alkphos&      sgpt&     sgot&   \textbf{gammagt}\\\hline
        90.3 $\pm$ 0.5& 75.2 $\pm$ 2.2&  29.4 $\pm$ 1.1& 24.4 $\pm$ 0.6&  \textbf{37.8 $\pm$ 2.4}\\90.9 $\pm$ 0.8& 75.8 $\pm$ 2.4&  52.0 $\pm$ 3.2& 36.4 $\pm$ 1.9&  \textbf{91.3 $\pm$ 11.1}\\89.6 $\pm$ 0.3& 64.4 $\pm$ 1.2&  21.1 $\pm$ 0.6& 19.6 $\pm$ 0.4&  \textbf{17.7 $\pm$ 0.9}\\95.6 $\pm$ 2.1& 65.3 $\pm$ 9.2& 122.0 $\pm$ 26.1& 66.2 $\pm$ 8.8& \textbf{133.3 $\pm$ 22.9}
    \end{tabular}
    \end{tabular}
    
    \caption{Posterior mean estimate +/- the poster standard deviation estimate of the different component means for the BUPA liver dataset for $G=4$, the value which maximised the marginal likelihood; the means correspond to the results of blood tests on the mean corpuscular volume (mcv), alkaline phosphotase (alkphos), alanine aminotransferase (sgpt), aspartate aminotransferase (sgot) and gamma-glutamyl transpeptidase (gammagt). g denotes the label of component g; the component means seem to strongly differ around the level of gammagt. Numbers were rounded to the first decimal place.}
    \label{tab:mupost_liver}
\end{table}

\section*{\suppDtext \label{sec: suppD}}

\subsection*{\textDone \label{ssec: Done}}

\begin{proof}[Proof of Theorem \ref{thm-compute_marglik}]By the law of total probability
\begin{align}\label{eq:y_ltp}
    p(Y)&=\sum_{C_1,\dots,C_n}p(Y|C_1,\dots,C_n)p(C_1,\dots,C_n).
\end{align} The prior on the clusters is given by \begin{align}\label{eq:cs_prior}
    p(C_1,\dots,C_n)=\prod^n_{i=1}\prod^G_{g=1}\left(\frac{1}{G}\right)^{C_{i,g}}=G^{-\sum^G_{g=1}\sum^n_{i=1}C_{i,g}},
\end{align} so we only need to solve for $p(Y|C_1,\dots,C_n)$. 
Notice that $Y$ can be written as \begin{align*}
    Y=\left(\prod_{g=1}^G\mu^{C_{1,g}}_g,\dots,\prod_{g=1}^G\mu^{C_{n,g}}_g\right)^\intercal + \varepsilon
\end{align*} where $\varepsilon\sim\text{MVN}(0,I_n)$ is independent from $\mu$. Thus, the marginal distribution of $Y$ given $C_1,\dots,C_n$ is given by \begin{align}\label{eq:marglik_ys_given_cs}
    p(Y_i|C_1,\dots,C_n)=\text{MVN}_n(Y_i;0_n,I_n+O).
\end{align}
Plugging \eqref{eq:cs_prior} and \eqref{eq:marglik_ys_given_cs} into \eqref{eq:y_ltp} gives the result.

We are now going to derive the distribution of $\mu$ given $Y$ and $C_1,\dots,C_n$. Let \begin{align*}
    \tilde{\mu}=\left(\prod_{g=1}^G\mu^{C_{1,g}}_g,\dots,\prod_{g=1}^G\mu^{C_{n,g}}_g\right)^\intercal.
\end{align*}

Then, by Bayes' rule \begin{align*}
    &p(\mu|Y,C_1,\dots,C_n)\propto p(\mu)p(C_1,\dots,C_n)p(Y|\mu,C_1,\dots,C_n)\\&\propto p(\mu)p(Y|\mu,C_1,\dots,C_n)\\&=\left(\prod^G_{g=1}\mathcal{N}(\mu_g;0,1)\right)\text{MVN}_n(Y;\tilde{\mu},I_n)\\&\propto\exp\left(-\frac{1}{2}\sum^G_{g=1}\mu_g^2\right)\exp\left(-\frac{1}{2}\left(Y-\tilde{\mu}\right)^\intercal\left(Y-\tilde{\mu}\right)\right)\\&\propto\exp\left(-\frac{1}{2}\sum^G_{g=1}\mu_g^2\right)\exp\left(-\frac{1}{2}\left(\tilde{\mu}^\intercal\tilde{\mu}-2Y^\intercal\tilde{\mu}\right)\right)\\&=\exp\left(-\frac{1}{2}\sum^G_{g=1}\mu_g^2-\frac{1}{2}\sum_{i=1}^{n}\left(\tilde{\mu}_i^2-2\tilde{\mu}_iY_i\right)\right)\\&=\exp\left(-\frac{1}{2}\sum^G_{g=1}\mu_g^2-\frac{1}{2}\sum_{g=1}^{G}\sum_{i:C_{i,g}=1}\left(\mu_g^2-2\mu_gY_i\right)\right).
\end{align*}

We recognise that the density factorises into the densities \begin{align*}
    &p(\mu_g|Y,C_1,\dots,C_n)\propto \exp\left(-\frac{1}{2}\mu_g^2-\frac{1}{2}\sum_{i:C_{i,g}=1}\left(\mu_g^2-2\mu_gY_i\right)\right)\\&=\exp\left(-\frac{1}{2}\left((1+\left(\sum_{i=1}^n C_{i,g}\right))\mu_g^2-2\mu_g \left(\sum_{i=1}^n C_{i,g}Y_i\right)\right)\right)\\&\propto \mathcal{N}\left(\mu_g;\frac{\sum_{i=1}^n C_{i,g}Y_i}{1+\sum_{i=1}^n C_{i,g}},\frac{1}{1+\sum_{i=1}^n C_{i,g}}\right).
\end{align*}

Analogously, \begin{align*}
    &p(C_i|Y,\mu,C_{-i})\propto p(C_i)p(Y|\mu,C_1,\dots,C_n)\\&\propto \exp\left(-\frac{1}{2}\sum_{g=1}^{G}\sum_{i:C_{i,g}=1}\left(\mu_g-Y_i\right)^2\right)\\&\propto \prod^G_{g=1}\exp\left(-\frac{1}{2}C_{i,g}\left(\mu_g-Y_i\right)^2\right)\\&=\prod^G_{g=1}\left(\exp\left(-\frac{1}{2}\left(\mu_g-Y_i\right)^2\right)\right)^{C_{i,g}},
\end{align*}

so the posterior probabilities on $C_i$ given $\mu,Y$ and all other allocation vectors are the normalization of the vectors\begin{align*}
    \left(\exp\left(-\frac{1}{2}\left(\mu_1-Y_i\right)^2\right),\dots,\exp\left(-\frac{1}{2}\left(\mu_G-Y_i\right)^2\right)\right).
\end{align*}
\end{proof}

\subsection*{\textDtwo \label{ssec: Dtwo}}

We know that the marginal likelihood is given by \begin{align*}
    Z(G)=\sum_{C_1,\dots,C_n}p(Y|C_1,\dots,C_n)p(C_1,\dots,C_n)\simeq G! p(Y|C^0)p(C^0)
\end{align*} if $p(Y|C^0)\simeq0$ for all but one allocation matrix $C^0$ (and its permutations). The expression $p(C^0)$ is equal to \citep[see for example][]{Fr06-FiniteMixtureAndMarkovSwitchingModels,Ha_et_al22-QuickMargLikEstimInMixtures}\begin{align*}
    \frac{\Gamma(G)}{\Gamma(G+n)}\prod^G_{g=1}\Gamma\left(1+\sum^n_{i=1}C_{i,g}^0\right).
\end{align*}

First of all, let us note that (denoting $\mu=(\mu_1,\dots,\mu_G),\Sigma=(\Sigma_1,\dots,\Sigma_G)$) \begin{align*}
    p(Y|C^0,\tau,\mu,\Sigma)=p(Y|C^0,\mu,\Sigma),
\end{align*}since the proportions are rendered redundant by the knowledge of the component membership. $p(Y|C^0)$ can thus be further decomposed into \begin{align*}
    p(Y|C^0)&=\int p(Y|C^0,\mu,\Sigma)\pi(\mu,\Sigma)\;d(\mu,\Sigma)\\&=\int\prod_{i=1} p(Y_i|C^0,\mu,\Sigma)\pi(\mu,\Sigma)\;d(\mu,\Sigma)\\&=\prod^G_{g=1}\int p_g(Y|\mu_g,\Sigma_g)\pi_g(\mu_g,\Sigma_g)\;d(\mu_g,\Sigma_g)=\prod^G_{g=1}Z_g(G),
\end{align*}where $\pi_g$ denotes the prior on $(\mu_g,\Sigma_g)$ and\begin{align*}
    p_g(Y|\mu_g,\Sigma_g)=\prod_{i:C_{i,g}=1} p(Y_i|C^0,\mu_g,\Sigma_g).
\end{align*} This is because $Y_1,\dots,Y_n$ are independent given $C^0,\mu,\Sigma$ and $(\mu_1,\Sigma_1),\dots,(\mu_G,\Sigma_G)$ are a priori independent.

Because $p_g$ denotes a likelihood on those data points $Y_i$ for which $C_{i,g}=1$ with length $n_g=\sum^n_{i=1}C_{i,g}$, $p_g$ and $\pi_g$ can be interpreted as the prior and likelihood, respectively, of a model with a Gaussian inverse Wishart distribution and the above integrals $Z_g(G)$ can be interpreted as the respective marginal likelihoods of each of these $G$ different models. These marginal likelihoods are tractable due to conjugacy and Bayes rule:\begin{align*}
    &\int p_g(Y|\mu_g,\Sigma_g)\pi_g(\mu_g,\Sigma_g)\;d(\mu_g,\Sigma_g)=\frac{p_g(Y|\mu_g,\Sigma_g)\pi_g(\mu_g,\Sigma_g)}{p_g(\mu_g,\Sigma_g|Y)}\\&\frac{\left(\prod_{i:C_{i,g}=1}\text{MVN}_d(Y_i;\mu_g,\Sigma_g)\right)\text{InvWis}(\Sigma_g;\phi_0,\Lambda)\text{MVN}_R(\mu_g;\beta,\Sigma_g/\kappa_0)}{\text{InvWis}(\Sigma_g;d,\Lambda_{n,g})\text{MVN}_d(\mu_g;\beta_{n,g},\Sigma_g/\kappa_{n,g})}
\end{align*} This holds for any $\mu_g,\Sigma_g$, where $p_g(\mu_g,\Sigma_g|Y)$ denotes the posterior distribution with respect to the $g$'th model. 

The posterior hyperparameters are for example given in \citet{Ge_et_al95-Bayesian_data_analysis}. Plugging in and factoring out $\mu_g,\Sigma_g$ gives the following (rather long) expression, where $\Gamma_d$ denotes the multivariate Gamma function (the posterior hyperparameter $\beta_{n,g}$ ends up being irrelevant since it factors out in the calculation but is still given for completeness): \begin{align*}
    p(Y)&\simeq G!\cdot p(Y|C^0)\cdot p(C^0)\\p(C^0)&=\frac{\Gamma(G)}{\Gamma(G+n)}\prod^G_{g=1}\Gamma\left(1+\sum^n_{i=1}C_{i,g}^0\right)\\p(Y|C^0)&=\prod^G_{g=1}Z_g(G)\\\log(Z_g(G))&=(\phi_0/2)\log(\text{det}(\Lambda))-(\phi_0 d/2)\log(2)-\log(\Gamma_d(\phi_0/2))\\&+(\phi_{n,g}d/2)\log(2)+\Gamma_d(\phi_{n,g}/2)+(-\phi_{n,g}/2)\log(\text{det}(\Lambda_{n,g}))\\&-n_gd\log(\sqrt{2\pi})-2\log(\kappa_{n,g})\\\phi_{n,g}&=\phi_0+n_g\\\Lambda_{n,g}&=\Lambda+\frac{\kappa_0n_g}{\kappa_0+n_g}(\Bar{Y}_g-\beta)(\Bar{Y}_g-\beta)^\intercal+\sum_{i:C_{i,g}=1}(Y_i-\Bar{Y}_g)(Y_i-\Bar{Y}_g)^\intercal\\\beta_{n,g}&=\frac{\kappa_0}{\kappa_0+n_g}\beta+\frac{ n_g}{n_g+\kappa_0}\Bar{Y}_{g}\\\Bar{Y}_{g,r}&=\frac{1}{n_g}\sum_{i:C_{i,g}=1}Y_{i,r}\\\kappa_{n,g}&=\kappa_0+n_g\\n_g&=\sum_{i:C_{i,g}=1}1
\end{align*}

\end{document}